\newtheorem{theorem}{Theorem}
\begin{document}
	
	\title{Movable Antenna Enhanced Integrated Sensing and Communication Via Antenna Position Optimization}
	
	\author{{Wenyan Ma, \IEEEmembership{Graduate Student Member, IEEE}, Lipeng Zhu, \IEEEmembership{Member, IEEE}, and  Rui Zhang, \IEEEmembership{Fellow, IEEE}}
		\vspace{-25pt}
		
		\thanks{	
			W. Ma and L. Zhu are with the Department of Electrical and Computer Engineering, National University of Singapore,
			Singapore 117583 (Email: {wenyan@u.nus.edu}, {zhulp@nus.edu.sg}).
			
			R. Zhang is with the School of Science and Engineering, Shenzhen Research Institute of Big Data, The Chinese University of Hong Kong, Shenzhen, Guangdong 518172, China (e-mail: rzhang@cuhk.edu.cn). He is also with the Department of Electrical and Computer Engineering, National University of Singapore, Singapore 117583 (e-mail: elezhang@nus.edu.sg).
	}}
	\maketitle
	
	\begin{abstract}
		In this paper, we propose an integrated sensing and communication (ISAC) system aided by the movable-antenna (MA) array, which can improve the communication and sensing performance via flexible antenna movement over conventional fixed-position antenna (FPA) array. First, we consider the downlink multiuser communication, where each user is randomly distributed  within a given three-dimensional zone with local movement. To reduce the overhead of frequent antenna movement, the antenna position vector (APV) is designed based on users' statistical channel state information (CSI), so that the antennas only need to be moved in a large timescale. Then, for target sensing, the Cramer-Rao bounds (CRBs) of the estimation mean square error for different spatial angles of arrival (AoAs) are derived as functions of MAs' positions. Based on the above, we formulate an optimization problem to maximize the expected minimum achievable rate among all communication users, with given constraints on the maximum acceptable CRB thresholds for target sensing. An alternating optimization algorithm is proposed to iteratively optimize one of the horizontal and vertical APVs of the MA array with the other being fixed. Numerical results demonstrate that our proposed MA arrays can significantly enlarge the trade-off region between communication and sensing performance compared to conventional FPA arrays with different inter-antenna spacing. It is also revealed that the steering vectors of the designed MA arrays exhibit low correlation in the angular domain, thus effectively reducing channel correlation among communication users to enhance their achievable rates, while alleviating ambiguity in target angle estimation to achieve improved sensing accuracy.
		
	\end{abstract}
	\begin{IEEEkeywords}
		Integrated sensing and communication (ISAC), movable antenna (MA), antenna position optimization, angle estimation, Cramer-Rao bound (CRB).
	\end{IEEEkeywords}
	
	\section{Introduction}
	
	The next generation of mobile communication systems is anticipated to enable a wide range of location-aware applications, including autonomous driving, robotic navigation, and virtual reality \cite{jiang2021the,chowdhury20206g}. These applications require wireless networks to provide advanced sensing capabilities, beyond the traditional quality of service (QoS) requirement on data transmission rates and reliability. As a result, there is a growing interest in integrated sensing and communication (ISAC), which is a new paradigm that combines sensing and communication functionalities by utilizing shared hardware and radio resources. In the context of ISAC, sensing involves extracting essential information about targets and the surrounding environment. For the implementation of ISAC, multiple-input multiple-output (MIMO) technology is widely recognized as a key enabler, offering advanced precoding capabilities for spatial adaptation and waveform shaping \cite{liu2022survey,shao2024intelligent}.
	
	To achieve enhanced spatial multiplexing for communication and form sensing beams with high angular resolution, ISAC transceivers typically require large-scale antenna arrays \cite{mailloux2005phased,wirth2005radar}. However, the associated hardware cost and power consumption increase proportionally with the number of antennas, bringing a significant challenge in developing cost-effective and high-performance ISAC systems. To address this issue, sparse antenna arrays have been proposed as a cost-efficient solution, where the number of antennas is reduced by enlarging the inter-antenna spacing to approximate the angular resolution of large-scale arrays \cite{greene1978sparse,roberts2011sparse}. Despite their advantages, sparse arrays typically rely on fixed-position antennas (FPAs), which lack the flexibility to adapt to varying communication and sensing requirements in wireless networks. This limitation prevents them from dynamically switching between the optimal array geometries for communication and sensing tasks \cite{wang2023can,gazzah2009optimum}. Additionally, FPAs in both large-scale and sparse arrays are unable to fully exploit the variations in wireless channels within the spatial region where the ISAC transmitter or receiver is located.
	
	To address the limitations of conventional FPA-based ISAC systems, we investigate in this paper ISAC systems aided by movable-antenna (MA) arrays \cite{zhu2023MAMag} (also known as the fluid antenna system (FAS) in the literature \cite{zhu2024historical,new2024tutorial,wutuo2024fluid,yeyuqi2023fluid}), which enable the flexible adjustment of antennas' positions at the ISAC transmitter/receiver. By introducing the new degree of freedom (DoF) in antenna position optimization, MA-aided ISAC systems can offer significant potential to enhance communication and sensing performance while maintaining the same number of antennas as conventional FPA array, explained as follows. First, by expanding the antenna movement region, MA array effectively increases its aperture, which not only improves angular resolution for precise angle estimation \cite{ma2024MAsensing}, but also provides enhanced spatial multiplexing performance for multiuser communications \cite{zhu2023MAmultiuser}. Additionally, the geometry of an MA array can be optimized to reduce the correlation between steering vectors across different directions, thereby decreasing ambiguity in angle estimation for wireless sensing and mitigating interference for communication users in different directions. Moreover, the real-time adjustability of MAs' positions enables the ISAC system to adapt to time-varying environmental conditions and diverse communication and sensing requirements. In practice, the geometry of an MA array can either be pre-configured for specific communication/sensing applications or dynamically adjusted in real time to adapt to varying environments and ISAC requirements.
	
	There has been a resurgence of research interest in MA recently due to its new applications and benefits unveiled in wireless communication. For example, in \cite{zhu2022MAmodel,mei2024movable,zhu2024wideband,new2024tutorial}, it was shown that utilizing MAs can effectively enhance the received signal-to-noise ratio (SNR) under either deterministic or stochastic channel model. The MA-aided multiuser communication systems have been widely investigated \cite{zhu2023MAmultiuser,xiao2023multiuser,wu2023movable,qin2024antenna,cheng2023sum,yang2024flexible,wutuo2024fluid}, where the MA position optimization can help mitigate the multiuser interference. In \cite{ma2022MAmimo,chen2023joint,yeyuqi2023fluid}, the spatial
	multiplexing of MA-aided MIMO systems were characterized. The channel estimation techniques for MA systems were explored in \cite{ma2023MAestimation,xiao2023channel}, reconstructing the channel response for arbitrary transmit and receive antenna locations. Moreover, In \cite{zhu2023MAarray,ma2024multi}, the efficacy of MA arrays was demonstrated in interference nulling and multi-beamforming. The
	efficiency of MA arrays in satellite communication and secure transmission were also
	studied in \cite{ZhuLP_satellite_MA} and \cite{hu2024secure,tang2024secure}, respectively. Furthermore, the six-dimensional MA (6DMA) system was introduced in \cite{shao20246DMA,shao2024discrete,shao2024Mag6DMA,shao2024exploiting}, which further incorporates three-dimensional (3D) antenna rotation to fully exploit the spatial DoFs in antenna movement.
	
	More recently, an increasing research interest has been drawn to the field of MA-aided ISAC systems. The sensing performance metrics involve maximizing the radar beamforming gain  \cite{WuHS_MA_RIS_ISAC,hao2024fluid,kuang2024movableISAC,zhang2024efficient,mayaodong2024movable,khalili2024advanced} or signal-to-clutter-plus-noise ratio (SCNR) \cite{lyu2024flexibleISAC,peng2024jointISAC,wang2024multiuser,xiu2024movable} with given target angle of arrival (AoA), and minimizing the lower-bound on the AoA estimation mean square error (MSE), i.e., the Cramer-Rao bound (CRB), for the given AoA \cite{qin2024cramer}. Specifically, the authors in \cite{WuHS_MA_RIS_ISAC,hao2024fluid} aimed to maximize the beamforming gain towards the target AoA, subject to the given constraint on the minimum signal-to-interference-plus-noise ratio (SINR) of all communication users. In \cite{kuang2024movableISAC,zhang2024efficient,mayaodong2024movable}, the multiuser sum-rate was maximized, with given constraint on the minimum beamforming gain towards the target. In \cite{khalili2024advanced}, the authors aimed to minimize the total transmit power of the ISAC transmitter, while guaranteeing the minimum SINR of each communication user and the required MSE between	the desired and actual beamforming gains towards the target AoAs. Furthermore, in \cite{lyu2024flexibleISAC,peng2024jointISAC}, the authors aimed to maximize the weighted sum of communication sum-rate and the sensing mutual information (MI), which is a function of the radar SCNR. In \cite{wang2024multiuser,xiu2024movable}, the downlink sum-rate and SNR were maximized, respectively, under the given sensing SCNR constraint. Moreover, in \cite{qin2024cramer}, the authors minimized the sensing CRB with given AoA information by jointly optimizing transmit beamforming and the positions of MAs at both users and the base station (BS), while ensuring a minimum SINR for communication users. However, the above studies focus on the sensing performance with given target AoA, while the fundamental relationship between MAs' positions and ISAC performance in target sensing without prior AoA information has yet to be explored.
	
	In this paper, we consider the MA-aided ISAC system by leveraging the additional DoFs offered by antenna position optimization. However, different from the aforementioned prior works \cite{WuHS_MA_RIS_ISAC,hao2024fluid,kuang2024movableISAC,zhang2024efficient,mayaodong2024movable,khalili2024advanced,lyu2024flexibleISAC,peng2024jointISAC,wang2024multiuser,xiu2024movable,qin2024cramer}, we design MAs' positions to enhance communication and sensing performance based on the statistical knowledge of communication users' channels and without relying on any prior knowledge of the sensing target's AoA. The main results of this paper are summarized as follows:
	
	\begin{itemize}
		\item First, we consider the downlink multiuser communication, where each user is randomly distributed  within a given 3D zone with local movement. To ensure user fairness in the long term, we aim to maximize the expected minimum achievable rate among all users by averaging out their channel variations due to local movement. In addition, to reduce the overhead of frequent antenna movement, the antenna position vector (APV) is designed based on users' statistical channel state information  (CSI) given their location zones. On the other hand, for target sensing, we derive the CRB of AoA estimation MSE as a function of MAs' positions. Specifically, we consider the maximum CRBs of the estimation MSE for the two AoAs with respect to (w.r.t.) the horizontal and vertical axes, respectively.
		\item Next, we formulate an optimization problem to maximize the expected minimum achievable rate among all communication users, with given constraints on the maximum acceptable CRB thresholds for target sensing. To solve this problem efficiently, we consider zero-forcing (ZF) precoding for communication users to obtain high-quality suboptimal solutions with low computational complexity. Then, an alternating optimization algorithm is proposed to iteratively optimize one of the horizontal and vertical APVs of the MA array with the other being fixed.
		\item Finally, extensive numerical results are presented to compare the trade-off region  between communication and sensing performance achieved by MA arrays versus conventional FPA arrays with different inter-antenna spacing. Important insights are also provided on how the designed MAs' positions can improve both the communication and sensing performance over FPA arrays. It is also revealed that the steering vectors of the designed MA arrays exhibit low correlation in the angular domain, thus effectively reducing channel correlation among communication users to enhance their achievable rates, while alleviating ambiguity in target angle estimation to achieve improved sensing accuracy.
	\end{itemize}
	
	The rest of this paper is organized as follows. Section II introduces the system model and formulates the optimization problem to characterize the performance trade-offs between communication and sensing in MA-aided ISAC systems. Section III presents the alternating optimization algorithm to solve the formulated problem. Numerical results and relevant discussions are provided in Section IV. Finally, the conclusions are drawn in Section V.

	\textit{Notations}: Vectors and matrices are represented using boldface lowercase and uppercase symbols, respectively. The operations of conjugate, transpose, and conjugate transpose are denoted by $(\cdot)^{\mathsf *}$, $(\cdot)^{\mathsf T}$, and $(\cdot)^{\mathsf H}$, respectively. The sets of $P \times Q$ complex-valued and real-valued matrices are represented by $\mathbb{C}^{P \times Q}$ and $\mathbb{R}^{P \times Q}$, respectively. The $p$th entry of a vector $\bm{a}$ is expressed as $\bm{a}[p]$, and the entry of a matrix $\bm{A}$ in its $p$th row and $q$th column is denoted by $\bm{A}[p,q]$. The $N$-dimensional identity matrix and the column vector with all elements equal to $1$ are denoted by $\bm{I}_N$ and $\bm{1}_N$, respectively. The ceiling of a real number $a$ is written as $\lceil a \rceil$. The $2$-norm of a vector $\bm{a}$ and the Frobenius norm of a matrix $\bm{A}$ are denoted by $\|\bm{a}\|_2$ and $\|\bm{A}\|_{\rm F}$, respectively. $\mathcal{CN}(0,\bm{\Gamma})$ denotes the circularly symmetric complex Gaussian (CSCG) distribution with mean zero and covariance matrix $\bm{\Gamma}$. The trace of a matrix $\bm{A}$ is denoted by ${\rm{Tr}}(\bm{A})$. Finally, $\otimes$ represents the Kronecker product.

	\begin{figure}[!t]
		\centering
		\includegraphics[width=70mm]{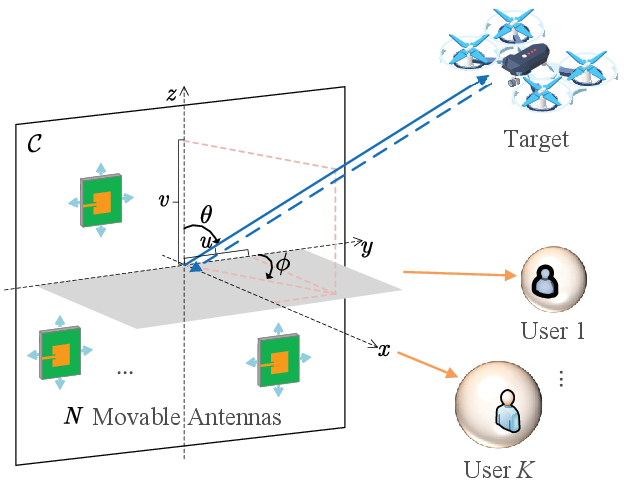}
		\caption{The considered MA-aided ISAC system.}
		\label{system}
	\end{figure}

	\section{System Model}
	As shown in Fig.~\ref{system}, we consider an ISAC system where a BS has $N$ MAs deployed on the $y\text{-}z$ plane to serve $K$ single-FPA users. To realize wireless sensing, the BS simultaneously transmits and receives the probing signals to estimate the target's AoAs w.r.t. $y$ and $z$ axes. The two-dimensional (2D) region for antenna movement is assumed to be continuous and denoted by $\mathcal{C}$, where the 2D coordinate of the $n$th ($n=1,2,\ldots,N$) MA is denoted as $\bm{r}_n=[y_n,z_n]^{\mathsf T}\in\mathcal{C}$. Denote the APV of $N$ MAs' positions by $\tilde{\bm{r}}=\left[\bm{r}_1^{\mathsf T}, \bm{r}_2^{\mathsf T}, \ldots, \bm{r}_N^{\mathsf T}\right]^{\mathsf T} \in{\mathbb{R}^{2N\times1}}$. To avoid self-interference between the communication and sensing subsystems, we assume that they share the same hardware but operate at separate time or frequency resource blocks. This allows for flexible design of the transmit signals/waveforms for each subsystem. However, the positions of the MAs must be properly designed to balance the trade-off between communication and sensing performance.

	\subsection{Communication Subsystem}
	MAs can enhance spatial multiplexing performance in multiuser communication systems via antenna position optimization. As shown in Fig.~\ref{system}, we consider the downlink multiuser communications, where each user is randomly distributed  within a given 3D zone $\mathcal{G}_k$ by accounting for its local movement, and receives signals from the BS. For ease of exposition, we assume a line-of-sight (LoS) channel between any user location and the BS \footnote{Since ISAC systems are typically operated in high frequency bands, such as millimeter-wave bands, the channel between the BS and user is generally dominated by the LoS path \cite{lirenwang2024irs}. In addition, the LoS channel model can be extended to a more general field-response channel model with multiple paths \cite{zhu2023MAMag}, and the optimization algorithms proposed in this paper can also be extended to this case.}. Denote the 3D location of the $k$th ($k=1,2,\ldots,K$) user and the $n$th ($n=1,2,\ldots,N$) MA as $\bm{u}_k=[u_{k,x},u_{k,y},u_{k,z}]^{\mathsf T}\in\mathcal{G}_k$ and $\bm{r}_n^{\rm{3D}}=[0,y_n,z_n]^{\mathsf T}$, respectively. Then, the distance and 3D wave vector from the BS to the $k$th user are given by $d(\bm{u}_k)=\|\bm{u}_k\|_2$ and $\bm{n}(\bm{u}_k)=\bm{u}_k/d(\bm{u}_k)$, respectively. Let $\bar{\bm{n}}(\bm{u}_k) = [u_{k,y}/d(\bm{u}_k), u_{k,z}/d(\bm{u}_k)]^{\mathsf{T}}$ denote the 2D wave vector to the $k$th user, which is the projection of $\bm{n}(\bm{u}_k)$ onto the $y\text{-}z$ plane. Accordingly, the propagation distance difference for the signal to the $k$th user between the MA’s position $\bm{r}_n^{\rm{3D}}$ and the origin of the $y\text{-}z$ plane can be expressed as
	\begin{equation}
		\bm{n}(\bm{u}_k)^{\mathsf T} \bm{r}_n^{\rm{3D}} = \bar{\bm{n}}(\bm{u}_k)^{\mathsf T} \bm{r}_n.
	\end{equation}
	Thus, the steering vector of the 2D MA array can be expressed as a function of the MAs' positions $\tilde{\bm{r}}$ and the 2D wave vector $\bar{\bm{n}}(\bm{u}_k)$ as follows:
	\begin{equation}\label{steering}
		\bm{\alpha}(\tilde{\bm{r}},\bar{\bm{n}}(\bm{u}_k)) \triangleq \left[ e^{j\frac{2\pi}{\lambda}\bar{\bm{n}}(\bm{u}_k)^{\mathsf T} \bm{r}_1}, \ldots, e^{j\frac{2\pi}{\lambda}\bar{\bm{n}}(\bm{u}_k)^{\mathsf T} \bm{r}_N} \right]^{\mathsf T} \in{\mathbb{C}^{N\times1}},
	\end{equation}	
	where $\lambda$ is the carrier wavelength. As a result, the LoS channel for the $k$th user at location $\bm{u}_k$ is expressed as
	\begin{align}\label{hk}
		\bm{h}_{k}(\tilde{\bm{r}},\bm{u}_k) &= \frac{\lambda }{4\pi d(\bm{u}_k)} e^{j\frac{2\pi}{\lambda}d(\bm{u}_k)} \bm{\alpha}(\tilde{\bm{r}},\bar{\bm{n}}(\bm{u}_k)).
	\end{align}
	For the considered MA-aided multiuser communication	system, multiple users are served by the BS via space-division multiple access (SDMA) in the downlink for maximizing the spatial multiplexing gain. Specifically, denote the collection of $K$ users' locations by $\tilde{\bm{u}}=\left[\bm{u}_1, \bm{u}_2, \ldots, \bm{u}_K\right] \in{\mathbb{R}^{3\times{K}}}$, and the channel	matrix between the BS and all users by $\bm{H}(\tilde{\bm{r}},\tilde{\bm{u}}) = \left[\bm{h}_{1}(\tilde{\bm{r}},\bm{u}_1), \bm{h}_{2}(\tilde{\bm{r}},\bm{u}_2), \ldots, \bm{h}_{K}(\tilde{\bm{r}},\bm{u}_K)\right] \in{\mathbb{C}^{N\times{K}}}$. Then, the received signals at all users are expressed as
	\begin{align}
		\bm{y} = \bm{H}(\tilde{\bm{r}},\tilde{\bm{u}})^{\mathsf H}\bm{W}\bm{s} + \bm{n},
	\end{align}	
	where $\bm{W}=[\bm{w}_1,\bm{w}_2,\ldots,\bm{w}_K]\in{\mathbb{C}^{N\times{K}}}$ represents the precoding matrix at the BS with a maximum power constraint $\|\bm{W}\|_{\rm F}^2\leq P$; $\bm{s}\in \mathbb{C}^{K\times1}$ is the transmit signal vector at the BS satisfying $\mathbb{E}\{\bm{s}\bm{s}^{\mathsf H}\} = \bm{I}_K$; and $\bm{n} \sim \mathcal{CN}(0,\sigma^2\bm{I}_K)$ denotes the additive white Gaussian noise (AWGN) vector at all users, with an average power of $\sigma^2$. Consequently, the receive SINR for the $k$th user can be expressed as
	\begin{align}
		\gamma_{k}(\tilde{\bm{r}},\bm{W},\bm{u}_k)= \frac{|\bm{w}_k^{\mathsf H}\bm{h}_{k}(\tilde{\bm{r}},\bm{u}_k)|^2}{\sum_{i=1,i\neq k}^{K}|\bm{w}_i^{\mathsf H}\bm{h}_{k}(\tilde{\bm{r}},\bm{u}_k)|^2 + \sigma^2}.
	\end{align}
	Accordingly, the achievable rate of the $k$th user is given by
	\begin{align}
		R_k(\tilde{\bm{r}},\bm{W},\bm{u}_k) = \log_2\left( 1+\gamma_{k}(\tilde{\bm{r}},\bm{W},\bm{u}_k) \right).
	\end{align}
	
	The wireless channels between the BS and users vary over time due to each user's local movement within each given zone. However, the limited movement speed of the MA array may not able to adapt to the instantaneous channels for mobile users. To address this practical issue, we propose a two-timescale design scheme for MA arrays: the precoding matrix $\bm{W}$ is optimized based on instantaneous channels of users, while the APV is designed based on the statistical CSI of users, which, in our context, refers to the uniformly distributed LoS channels corresponding to each user's located zone. Moreover, to ensure user fairness in the long term, we aim to maximize the expected minimum achievable rate among all users, which is expressed as
	\begin{align}\label{barR}
		\bar{R}(\tilde{\bm{r}}) = \underset{\tilde{\bm{u}}}{\mathbb{E}}\{\max_{\bm{W}} \min_{k} R_k(\tilde{\bm{r}},\bm{W},\bm{u}_k) \},
	\end{align}
	where the expectation is taken over the random communication users' locations within their respective zones, $\{\mathcal{G}_k\}_{k=1}^K$. Notably, unlike conventional multiuser communication with FPAs, the expected minimum achievable rate for MA systems in \eqref{barR} depends on the positions of MAs, $\tilde{\bm{r}}$, which affect both the channel matrix $\bm{H}(\tilde{\bm{r}},\tilde{\bm{u}})$ and the corresponding precoding matrix $\bm{W}$.

	Since deriving the expectation in \eqref{barR} analytically is challenging, the Monte Carlo method can be utilized to approximate $\bar{R}(\tilde{\bm{r}})$, which involves generating $Q$ independent realizations of users' locations, and then averaging the corresponding achievable rate over all random realizations \cite{shao20246DMA}. Thus, the expected minimum achievable rate of all users in \eqref{barR} can be approximated as 
	\begin{align}\label{tildeR}
		\bar{R}(\tilde{\bm{r}})\approx\tilde{R}(\tilde{\bm{r}}) \triangleq \frac{1}{Q} \sum_{q=1}^{Q} \max_{\bm{W}^{q}} \min_{k} R_k(\tilde{\bm{r}},\bm{W}^{q},\bm{u}_k^{q}),
	\end{align}
	where $\tilde{\bm{u}}^{q}=\left[\bm{u}_1^q, \bm{u}_2^q, \ldots, \bm{u}_K^q\right] \in{\mathbb{R}^{3\times{K}}}$ with $\bm{u}_k^q \in\mathcal{G}_k$, $k=1,2,\ldots,K$, and $\bm{W}^{q}=[\bm{w}_1^q,\bm{w}_2^q,\ldots,\bm{w}_K^q]\in{\mathbb{C}^{N\times{K}}}$ denote the users' locations and the corresponding precoding matrix at the BS, respectively, for maximizing the minimum achievable rate among users under the $q$th ($q=1,2,\ldots,Q$) realization.

	\subsection{Sensing Subsystem}
	As shown in Fig.~\ref{system}, we consider a monostatic radar system with $N$ MAs deployed on a 2D plane to estimate the target's AoAs w.r.t. the $y$ and $z$ axes. To perform AoA estimation, the BS consecutively transmits the probing signals and then receives the echoes reflected by the target over $T$ snapshots. We assume an LoS channel for the BS-target-BS link, which remains static during $T$ snapshots. Given that the antenna movement region is typically much smaller than the BS-target distance, we adopt the far-field channel model for the BS-target-BS link \cite{zhu2022MAmodel,zhu2023MAmultiuser}. As depicted in Fig.~\ref{system}, the physical elevation and azimuth AoAs of the LoS path between the BS and target are denoted by $\theta \in [0, \pi]$ and $\phi \in [0, \pi]$, respectively. For convenience, the spatial AoAs w.r.t. $x$, $y$, and $z$ axes are defined as
	\begin{equation}
		w\triangleq\sin \theta \sin \phi, u\triangleq\sin \theta \cos \phi, v\triangleq\cos \theta.
	\end{equation}
	Then, the 3D wave vector from the BS to the target is expressed as $\bm{n}^{\rm{s}} = [w, u, v]^{\mathsf T}$. Let $\bm{\chi} = [u, v]^{\mathsf{T}}$ denote the 2D wave vector to the target, which is the projection of $\bm{n}^{\rm{s}}$ onto the $y\text{-}z$ plane. Accordingly, the propagation distance difference for the signal to the target between the MA’s position $\bm{r}_n^{\rm{3D}}$ and the origin of the $y\text{-}z$ plane can be expressed as
	\begin{equation}
		\left(\bm{n}^{\rm{s}}\right)^{\mathsf T} \bm{r}_n^{\rm{3D}} = \bm{\chi}^{\mathsf T} \bm{r}_n.
	\end{equation}
	Thus, the steering vector for the signal to the target can be expressed as a function of the MAs' positions $\tilde{\bm{r}}$ and the two spatial AoAs $\bm{\chi}$, i.e., $\bm{\alpha}(\tilde{\bm{r}},\bm{\chi}) \in{\mathbb{C}^{N\times1}}$. As a result, the BS-target-BS LoS channel matrix is given by
	\begin{equation}\label{H2}
		\bm{H}^{\rm{s}}(\tilde{\bm{r}},\bm{\chi})=\beta\bm{\alpha}(\tilde{\bm{r}},\bm{\chi})\bm{\alpha}(\tilde{\bm{r}},\bm{\chi})^{\mathsf T},
	\end{equation}
	where $\beta$ is the complex channel coefficient incorporating the path loss of the BS-target-BS link as well as the target radar cross section (RCS).
	
	For any given APV $\tilde{\bm{r}}$, the maximum likelihood estimation (MLE) method is adopted for the joint estimation of the spatial AoAs $u$ and $v$ \cite{kay1993fundamentals}. Specifically, the received signal at the $t$th snapshot ($t=1, 2, \ldots, T$) is given by
	\begin{equation}
		\bm{y}^{\rm{s}}_t=\bm{H}^{\rm{s}}(\tilde{\bm{r}},\bm{\chi})\bm{s}^{\rm{s}}_t + \bm{z}^{\rm{s}}_t,
	\end{equation}
	where $\bm{s}^{\rm{s}}_t\in \mathbb{C}^{N\times1}$ represents the transmit probing signal from the BS with an average power $\mathbb{E}\{\|\bm{s}^{\rm{s}}_t\|_2^2\} = P^{\rm{s}}$. $\bm{z}^{\rm{s}}_t \sim \mathcal{CN}(0,\sigma^2\bm{I}_N)$ denotes the AWGN vector at the BS receiver, with an average power of $\sigma^2$.
	
	To estimate the spatial AoAs $u$ and $v$, the received signals over $T$ snapshots are stacked into the following matrix as
	\begin{equation}\label{Y}
		\bm{Y}^{\rm{s}}\triangleq[\bm{y}^{\rm{s}}_1,\bm{y}^{\rm{s}}_2,\ldots,\bm{y}^{\rm{s}}_T]=\bm{H}^{\rm{s}}(\tilde{\bm{r}},\bm{\chi})\bm{S}^{\rm{s}} + \bm{Z}^{\rm{s}},
	\end{equation}
	where $\bm{S}^{\rm{s}}\triangleq[\bm{s}^{\rm{s}}_1,\bm{s}^{\rm{s}}_2,\ldots,\bm{s}^{\rm{s}}_T]\in \mathbb{C}^{N\times T}$ and $\bm{Z}^{\rm{s}}\triangleq[\bm{z}^{\rm{s}}_1,\bm{z}^{\rm{s}}_2,\ldots,\bm{z}^{\rm{s}}_T]\in \mathbb{C}^{N \times T}$. To ensure uniform sensing performance for any $[u, v] \in [-1, 1] \times [-1, 1]$, it is generally required that $\bm{S}^{\rm{s}}$ is a row-orthogonal matrix, i.e., $\bm{S}^{\rm{s}}(\bm{S}^{\rm{s}})^{\mathsf{H}} = \frac{P^{\rm{s}}T}{N} \bm{I}_N$, such that $\bm{S}^{\rm{s}}$ generates an omnidirectional beampattern in the angular domain for uniformly scanning targets across all possible directions \cite{shao2022target}. Note that this requires $T \geq N$ to ensure that the $N \times T$ matrix $\bm{S}^{\rm{s}}$ can be row-orthogonal. For example, one such matrix $\bm{S}^{\rm{s}}$ can be constructed as a sub-matrix consisting of the first $N$ rows of a $T \times T$ discrete Fourier transform (DFT) matrix with $T \geq N$. In this case, each entry of $\bm{S}^{\rm{s}}$ is given by
	\begin{align}
		\bm{S}^{\rm{s}}[n,t] = \sqrt{\frac{P^{\rm{s}}}{N}}e^{j\frac{2\pi}{T}(t-1)(n-1)}.
	\end{align}
	Then, the two spatial AoAs can be estimated according to the following theorem.
	\begin{theorem}
		The MLE of the two spatial AoAs is given by
		\begin{equation}\label{MLE}
			\hat{\bm{\chi}} = \arg\max_{\bar{\bm{\chi}}} \left|\left(\bm{\alpha}(\tilde{\bm{r}},\bar{\bm{\chi}})\otimes \bm{\alpha}(\tilde{\bm{r}},\bar{\bm{\chi}})\right)^{\mathsf T} {\rm{vec}}\left(\bm{S}^{\rm{s}}(\bm{Y}^{\rm{s}})^{\mathsf H}\right)\right|^2,
		\end{equation}
	which can be solved by exhaustively searching for $\bar{\bm{\chi}}=[\bar{u},\bar{v}]^{\mathsf T}$ over the interval $[-1,1]\times[-1,1]$.
	\end{theorem}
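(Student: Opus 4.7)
The plan is to write down the Gaussian log-likelihood induced by the model $\bm{Y}^{\rm{s}} = \beta\,\bm{\alpha}(\tilde{\bm{r}},\bm{\chi})\bm{\alpha}(\tilde{\bm{r}},\bm{\chi})^{\mathsf T}\bm{S}^{\rm{s}} + \bm{Z}^{\rm{s}}$ with $\bm{Z}^{\rm{s}}$ having i.i.d.\ $\mathcal{CN}(0,\sigma^2)$ entries, and then to concentrate the likelihood with respect to the unknown nuisance parameter $\beta$ before maximizing over $\bar{\bm{\chi}}$. Since the path-loss/RCS coefficient $\beta$ is also unknown in practice, treating it as a free complex scalar to be estimated jointly with $\bar{\bm{\chi}}$ is the natural MLE formulation, and it will ultimately factor out cleanly.

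First I would expand $\|\bm{Y}^{\rm{s}} - \beta\,\bm{a}\bm{a}^{\mathsf T}\bm{S}^{\rm{s}}\|_{\rm F}^2$ with $\bm{a}\triangleq\bm{\alpha}(\tilde{\bm{r}},\bar{\bm{\chi}})$, and use the two key normalizations available: (i) each entry of the steering vector has unit modulus, so $\bm{a}^{\mathsf T}\bm{a}^{\ast}=\bm{a}^{\mathsf H}\bm{a}=N$, and (ii) the probing waveform is row-orthogonal, $\bm{S}^{\rm{s}}(\bm{S}^{\rm{s}})^{\mathsf H}=\tfrac{P^{\rm{s}}T}{N}\bm{I}_N$. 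Together these collapse the quadratic-in-$\beta$ term to $|\beta|^{2}P^{\rm{s}}TN$, which is independent of $\bar{\bm{\chi}}$. The cross term reduces, via the cyclic property of the trace, to $-2\,\Re\{\beta^{\ast}\,c(\bar{\bm{\chi}})\}$ where $c(\bar{\bm{\chi}})\triangleq \bm{a}^{\mathsf H}\bm{Y}^{\rm{s}}(\bm{S}^{\rm{s}})^{\mathsf H}\bm{a}^{\ast}$. Setting the gradient with respect to $\beta^{\ast}$ to zero yields $\hat{\beta}=c(\bar{\bm{\chi}})/(P^{\rm{s}}TN)$, and substituting back produces a concentrated objective that is equivalent (up to a positive constant and the data-only term $\|\bm{Y}^{\rm{s}}\|_{\rm F}^{2}$) to maximizing $|c(\bar{\bm{\chi}})|^{2}$ over $\bar{\bm{\chi}}$.

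The remaining step I anticipate to be the main algebraic obstacle is massaging $|c(\bar{\bm{\chi}})|^{2}$ into the Kronecker-product form stated in \eqref{MLE}. Using the identity $\bm{x}^{\mathsf T}\bm{M}\bm{y}=(\bm{y}\otimes \bm{x})^{\mathsf T}\,{\rm vec}(\bm{M})$, I would first write $c(\bar{\bm{\chi}})=(\bm{a}^{\ast}\otimes \bm{a}^{\ast})^{\mathsf T}\,{\rm vec}(\bm{Y}^{\rm{s}}(\bm{S}^{\rm{s}})^{\mathsf H})$. Then I would take the conjugate of this scalar, which is harmless inside $|\cdot|^{2}$, and use $(\bm{Y}^{\rm{s}}(\bm{S}^{\rm{s}})^{\mathsf H})^{\mathsf H}=\bm{S}^{\rm{s}}(\bm{Y}^{\rm{s}})^{\mathsf H}$ to flip the conjugates onto the data side while leaving the steering vectors unconjugated, exactly matching $\bigl(\bm{\alpha}(\tilde{\bm{r}},\bar{\bm{\chi}})\otimes\bm{\alpha}(\tilde{\bm{r}},\bar{\bm{\chi}})\bigr)^{\mathsf T}{\rm vec}\bigl(\bm{S}^{\rm{s}}(\bm{Y}^{\rm{s}})^{\mathsf H}\bigr)$. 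Care must be taken with complex conjugation in two places (in $\bm{a}^{\ast}$ versus $\bm{a}$ and in $\bm{Y}^{\rm{s}}(\bm{S}^{\rm{s}})^{\mathsf H}$ versus $\bm{S}^{\rm{s}}(\bm{Y}^{\rm{s}})^{\mathsf H}$), but these are cosmetic once the $|\cdot|^{2}$ is applied.

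Finally I would conclude that, because the mapping $\bar{\bm{\chi}}\mapsto\bm{\alpha}(\tilde{\bm{r}},\bar{\bm{\chi}})$ is highly nonlinear through the MA geometry and the concentrated objective is a non-concave function of $(\bar{u},\bar{v})$, a closed-form solution is unavailable and a two-dimensional exhaustive search over $[-1,1]\times[-1,1]$ (the feasible domain of the spatial AoAs) returns $\hat{\bm{\chi}}$. This completes the derivation of \eqref{MLE}.
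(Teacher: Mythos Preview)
Your proposal is correct and follows essentially the same route as the paper's Appendix~A: both concentrate the Gaussian likelihood over the nuisance parameter $\beta$, use $\bm{a}^{\mathsf H}\bm{a}=N$ and $\bm{S}^{\rm s}(\bm{S}^{\rm s})^{\mathsf H}=\tfrac{P^{\rm s}T}{N}\bm{I}_N$ to make the denominator constant, and then rewrite the surviving scalar via ${\rm vec}$/trace/Kronecker identities to obtain the stated criterion. The only cosmetic difference is that the paper vectorizes $\bm{Y}^{\rm s}$ up front and chains through four trace identities, whereas you stay in Frobenius-norm form and reach the Kronecker expression more directly via $x^{\mathsf T}\bm{M}y=(y\otimes x)^{\mathsf T}{\rm vec}(\bm{M})$ together with a conjugation inside $|\cdot|^2$.
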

	\begin{proof}
		See Appendix A.
	\end{proof}

	Then, the AoA estimation MSE can be expressed as
	\begin{align}
		{\rm{MSE}}(u)\triangleq\mathbb{E}\{|u-\hat{u}|^2\},~ {\rm{MSE}}(v)\triangleq\mathbb{E}\{|v-\hat{v}|^2\}.
	\end{align}
	Let $\bm{y}\triangleq[y_1,y_2,\ldots,y_N]^{\mathsf T}\in \mathbb{R}^{N\times1}$ and $\bm{z}\triangleq[z_1,z_2,\ldots,z_N]^{\mathsf T}\in \mathbb{R}^{N\times1}$ denote the horizontal and vertical APVs, respectively. Denoting the mean function as $\mu(\bm{y})=\frac{1}{N}\sum_{n=1}^{N}y_n$, the variance function and covariance function are defined as ${\rm{var}}(\bm{y})\triangleq \frac{1}{N}\sum_{n=1}^{N}(y_n - \mu(\bm{y}))^2 = \frac{1}{N}\sum_{n=1}^{N}y_n^2 - \mu(\bm{y})^2$ and ${\rm{cov}}(\bm{y},\bm{z})\triangleq \frac{1}{N}\sum_{n=1}^{N}(y_n-\mu(\bm{y}))(z_n-\mu(\bm{z})) = \frac{1}{N}\sum_{n=1}^{N}y_n z_n - \mu(\bm{y})\mu(\bm{z})$, respectively. Thus, the lower-bound of ${\rm{MSE}}(u)$ and ${\rm{MSE}}(v)$, i.e., the CRB, is given by the following theorem.
		\begin{theorem}
		The CRB of AoA estimation MSE is given by
		\begin{align}\label{CRBr}
			{\rm{MSE}}(u)\geq{\rm{CRB}}_u(\tilde{\bm{r}}) &=  \frac{\sigma^2\lambda^2}{16\pi^2P^{\rm{s}}TN|\beta|^2}\frac{1}{{\rm{var}}(\bm{y})-\frac{{\rm{cov}}(\bm{y},\bm{z})^2}{{\rm{var}}(\bm{z})}}, \notag\\
			{\rm{MSE}}(v)\geq{\rm{CRB}}_v(\tilde{\bm{r}}) &=  \frac{\sigma^2\lambda^2}{16\pi^2P^{\rm{s}}TN|\beta|^2}\frac{1}{{\rm{var}}(\bm{z})-\frac{{\rm{cov}}(\bm{y},\bm{z})^2}{{\rm{var}}(\bm{y})}}.
		\end{align}
	\end{theorem}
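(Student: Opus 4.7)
The plan is to apply the Slepian-Bangs formula for the Fisher information matrix (FIM) associated with the CSCG observation model in \eqref{Y}, treating $u$, $v$, and the real and imaginary parts of $\beta$ as four unknown real parameters, and then extracting the $(u,v)$-block of the inverse FIM. The observation $\bm{Y}^{\rm{s}}$ has mean $\bm{M}(u,v,\beta) = \beta\,\bm{\alpha}(\tilde{\bm{r}},\bm{\chi})\bm{\alpha}(\tilde{\bm{r}},\bm{\chi})^{\mathsf T}\bm{S}^{\rm{s}}$ and white CSCG noise with per-entry variance $\sigma^2$, so the FIM entries take the clean form $F_{ij}=\tfrac{2}{\sigma^2}\Re\left\{{\rm{Tr}}\!\left((\partial_i \bm{M})^{\mathsf H}(\partial_j \bm{M})\right)\right\}$.

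First, I would compute the derivatives of the steering vector. A direct differentiation of \eqref{steering} gives $\partial_u \bm{\alpha} = j\tfrac{2\pi}{\lambda}\bm{D}_y \bm{\alpha}$ and $\partial_v \bm{\alpha} = j\tfrac{2\pi}{\lambda}\bm{D}_z \bm{\alpha}$, with $\bm{D}_y = {\rm{diag}}(\bm{y})$ and $\bm{D}_z = {\rm{diag}}(\bm{z})$, while $\partial_{\Re\beta}\bm{M}$ and $\partial_{\Im\beta}\bm{M}$ are proportional to $\bm{\alpha}\bm{\alpha}^{\mathsf T}\bm{S}^{\rm{s}}$. The row-orthogonality assumption $\bm{S}^{\rm{s}}(\bm{S}^{\rm{s}})^{\mathsf H}=\tfrac{P^{\rm{s}}T}{N}\bm{I}_N$ is essential here, because it collapses every trace appearing in the FIM to a product of the form $\tfrac{P^{\rm{s}}T}{N}\cdot\bm{\alpha}^{\mathsf H}\bm{X}\bm{\alpha}$ for some diagonal $\bm{X}\in\{\bm{I},\bm{D}_y,\bm{D}_z,\bm{D}_y^2,\bm{D}_z^2,\bm{D}_y\bm{D}_z\}$. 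Since $|\bm{\alpha}[n]|=1$, these reduce to $N$, $\sum_n y_n$, $\sum_n z_n$, $\sum_n y_n^2$, $\sum_n z_n^2$, and $\sum_n y_n z_n$, respectively, i.e., the raw first and second moments of the antenna coordinates.

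Next, I would assemble the $4\times 4$ FIM and partition it as a $2\times 2$ block $\bm{F}_{\chi\chi}$ for the AoAs, a $2\times 2$ block $\bm{F}_{\beta\beta}$ for $\beta$, and cross blocks. Applying the Schur-complement identity, the inverse FIM has $(u,v)$-block equal to $(\bm{F}_{\chi\chi}-\bm{F}_{\chi\beta}\bm{F}_{\beta\beta}^{-1}\bm{F}_{\beta\chi})^{-1}$. The key algebraic miracle is that the correction term $\bm{F}_{\chi\beta}\bm{F}_{\beta\beta}^{-1}\bm{F}_{\beta\chi}$ contributes precisely $N\mu(\bm{y})^2$, $N\mu(\bm{z})^2$, and $N\mu(\bm{y})\mu(\bm{z})$ to the three distinct entries of $\bm{F}_{\chi\chi}$, thereby converting the raw second moments into the centered second moments ${\rm{var}}(\bm{y})$, ${\rm{var}}(\bm{z})$, and ${\rm{cov}}(\bm{y},\bm{z})$. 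The resulting reduced $2\times 2$ FIM takes the form
\begin{equation*}
\frac{16\pi^2 P^{\rm{s}}T N|\beta|^2}{\sigma^2\lambda^2}\begin{bmatrix} {\rm{var}}(\bm{y}) & {\rm{cov}}(\bm{y},\bm{z}) \\ {\rm{cov}}(\bm{y},\bm{z}) & {\rm{var}}(\bm{z}) \end{bmatrix},
\end{equation*}
whose inverse, via the standard $2\times 2$ formula, immediately yields the diagonal entries claimed in \eqref{CRBr}.

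The main obstacle I anticipate is book-keeping: correctly parametrizing the complex nuisance parameter $\beta$ as two real parameters and carrying the cross terms through the Schur complement without sign or factor errors, so that the raw-to-centered-moment conversion falls out cleanly. A secondary subtlety is verifying that the off-diagonal cross-FIM entries involving $\partial_u\bm{M}$ and $\partial_{\Re\beta}\bm{M}$ (and similarly for $\Im\beta$) do not vanish by symmetry; they do not, and it is precisely their contribution that produces the centered moments. Once these are handled, all remaining manipulations are routine matrix algebra on a $4\times 4$ block FIM.
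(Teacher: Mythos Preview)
Your proposal is correct and follows essentially the same route as the paper: both parametrize the unknowns as $(u,v,\Re\beta,\Im\beta)$, apply the Slepian--Bangs formula with the white-noise simplification, exploit $\bm{S}^{\rm s}(\bm{S}^{\rm s})^{\mathsf H}=\tfrac{P^{\rm s}T}{N}\bm{I}_N$ to reduce all traces to raw coordinate moments, and then use the Schur complement so that the $\beta$-block correction converts raw to centered moments, yielding the $2\times 2$ covariance matrix whose inverse gives \eqref{CRBr}. The only minor slip is the exact factor in the correction term (it subtracts $2\gamma_y^2=2N^2\mu(\bm{y})^2$, not $N\mu(\bm{y})^2$, from the $(1,1)$ entry in the paper's normalization), but you have already flagged this book-keeping as the main obstacle and it does not affect the strategy.
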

	\begin{proof}
		See Appendix B.
	\end{proof}
	
	The results in \eqref{CRBr} show that the CRB of AoA estimation MSE depends on the MAs' positions, which affect both ${\rm{CRB}}_u(\tilde{\bm{r}})$ and ${\rm{CRB}}_v(\tilde{\bm{r}})$. Specifically, ${\rm{CRB}}_u(\tilde{\bm{r}})$ and ${\rm{CRB}}_v(\tilde{\bm{r}})$ decrease as ${\rm{var}}(\bm{y}) - \frac{{\rm{cov}}(\bm{y}, \bm{z})^2}{{\rm{var}}(\bm{z})}$ and ${\rm{var}}(\bm{z}) - \frac{{\rm{cov}}(\bm{y}, \bm{z})^2}{{\rm{var}}(\bm{y})}$ increase, respectively. Hence, the MAs' positions $\tilde{\bm{r}}$ can be optimized to jointly minimize ${\rm{CRB}}_u(\tilde{\bm{r}})$ and ${\rm{CRB}}_v(\tilde{\bm{r}})$, which can be achieved by maximizing ${\rm{var}}(\bm{y})$ and ${\rm{var}}(\bm{z})$ while minimizing ${\rm{cov}}(\bm{y}, \bm{z})$. To this end, MAs should be positioned as separately as possible in both the $y$ and $z$ directions, thereby increasing ${\rm{var}}(\bm{y})$ and ${\rm{var}}(\bm{z})$, and positioned symmetrically w.r.t. the $y$ and $z$ axes to minimize ${\rm{cov}}(\bm{y}, \bm{z})$. However, a trade-off typically exists between minimizing ${\rm{CRB}}_u(\tilde{\bm{r}})$ and ${\rm{CRB}}_v(\tilde{\bm{r}})$ due to the coupling between ${\rm{var}}(\bm{y})$, ${\rm{var}}(\bm{z})$, and ${\rm{cov}}(\bm{y}, \bm{z})$.

	As shown in \eqref{CRBr}, ${\rm{CRB}}_u(\tilde{\bm{r}})$ and ${\rm{CRB}}_v(\tilde{\bm{r}})$ decrease with increasing sensing channel coefficient power $|\beta|^2$, which is independent of the MAs' positions $\tilde{\bm{r}}$. To evaluate the impact of the MAs' positions on sensing performance, we consider the minimum sensing channel coefficient power, $\tilde{\beta}$, required for the target to be detected by the BS, corresponding to the longest BS-target distance and the smallest target RCS. Then, we aim to minimize the maximum values of ${\rm{CRB}}_u(\tilde{\bm{r}})$ and ${\rm{CRB}}_v(\tilde{\bm{r}})$ with given $\tilde{\beta}$, without any prior information on the target AoAs. The maximum values of ${\rm{CRB}}_u(\tilde{\bm{r}})$ and ${\rm{CRB}}_v(\tilde{\bm{r}})$ with given $\tilde{\beta}$ are then given by
	\begin{align}\label{barCRBr}
		\overline{{\rm{CRB}}}_u(\tilde{\bm{r}}) &=  \frac{\zeta}{{\rm{var}}(\bm{y})-\frac{{\rm{cov}}(\bm{y},\bm{z})^2}{{\rm{var}}(\bm{z})}}, \\
		\overline{{\rm{CRB}}}_v(\tilde{\bm{r}}) &=  \frac{\zeta}{{\rm{var}}(\bm{z})-\frac{{\rm{cov}}(\bm{y},\bm{z})^2}{{\rm{var}}(\bm{y})}}, \notag
	\end{align}
	where $\zeta\triangleq\frac{\sigma^2\lambda^2}{16\pi^2P^{\rm{s}}TN\tilde{\beta}}$. To ensure adequate sensing performance, we impose a maximum acceptable CRB threshold $\eta$ on $\overline{{\rm{CRB}}}_u(\tilde{\bm{r}})$ and $\overline{{\rm{CRB}}}_v(\tilde{\bm{r}})$. Based on \eqref{barCRBr}, these constraints can be simplified as
	\begin{align}
		&\overline{{\rm{CRB}}}_u(\tilde{\bm{r}})\leq \eta ~\iff~ {\rm{var}}(\bm{y})-\frac{{\rm{cov}}(\bm{y},\bm{z})^2}{{\rm{var}}(\bm{z})} \geq \bar{\eta}, \\
		&\overline{{\rm{CRB}}}_v(\tilde{\bm{r}})\leq \eta ~\iff~ {\rm{var}}(\bm{z})-\frac{{\rm{cov}}(\bm{y},\bm{z})^2}{{\rm{var}}(\bm{y})} \geq \bar{\eta}, \notag
	\end{align}
	where $\bar{\eta}\triangleq \zeta/\eta$.

	\subsection{Problem Formulation}
	To characterize the trade-off between the communication and sensing performance, in this paper, we aim to maximize the expected minimum achievable rate $\tilde{R}(\tilde{\bm{r}})$ for all communication users by jointly optimizing the precoding matrices $\{\bm{W}^{q}\}_{q=1}^Q$ for $Q$ independent realizations of users' locations as well as the MAs' positions $\tilde{\bm{r}}$ at the BS, subject to a given constraint on the maximum acceptable CRB thresholds on $\overline{{\rm{CRB}}}_u(\tilde{\bm{r}})$ and $\overline{{\rm{CRB}}}_v(\tilde{\bm{r}})$ for ensuring sensing performance. Denoting $\widetilde{\bm{W}}=\left[\bm{W}^{1}, \bm{W}^{2}, \ldots, \bm{W}^{Q}\right] \in{\mathbb{R}^{N\times{QK}}}$ as the collection of $Q$ precoding matrices, the optimization problem can thus be formulated as
	\begin{subequations}
		\begin{align}
			\textrm {(P0)} \quad \max_{\tilde{\bm{r}}} \quad & \frac{1}{Q} \sum_{q=1}^{Q} \max_{\bm{W}^{q}} \min_{k} R_k(\tilde{\bm{r}},\bm{W}^{q},\bm{u}_k^{q}) \label{P0a} \\
			\text{s.t.} \quad & {\rm{var}}(\bm{y})-\frac{{\rm{cov}}(\bm{y},\bm{z})^2}{{\rm{var}}(\bm{z})} \geq \bar{\eta}, \label{P0b}\\
			&{\rm{var}}(\bm{z})-\frac{{\rm{cov}}(\bm{y},\bm{z})^2}{{\rm{var}}(\bm{y})} \geq \bar{\eta}, \label{P0c}\\ 
			&\bm{r}_n\in\mathcal{C},~~n=1,2,\ldots,N,\label{P0d}\\
			& \|\bm{r}_n-\bm{r}_m\|_2\geq D_0,~~n\neq m,\label{P0e}\\
			& \|\bm{W}^{q}\|_{\rm F}^2\leq P,~~q=1,2,\ldots,Q, \label{P0f}
		\end{align}
	\end{subequations}
	where constraint \eqref{P0d} ensures that the MAs are moved within the feasible region $\mathcal{C}$; constraint \eqref{P0e} enforces a minimum distance of $D_0$ between adjacent MAs to prevent antenna coupling;	and the transmit power of the communication system at the BS is no larger than $P$ in constraint \eqref{P0f}. To guarantee the sensing performance, $\bar{\eta}$ in constraints \eqref{P0b} and \eqref{P0c} is a given threshold determined by the maximum	CRB constraint. Moreover, to simplify the optimization, we assume that $\mathcal{C}$ is a convex 2D region, ensuring that constraint \eqref{P0d} is convex. For cases where $\mathcal{C}$ is a non-convex 2D region, its largest convex sub-region can be identified using the iterative regional inflation method described in \cite{deits2015computing}. The 2D positions of the MAs within this convex sub-region can then be determined using the proposed algorithm in the following section.  Note that problem (P0) is a non-convex optimization problem because the objective function \eqref{P0a} is non-concave w.r.t. both $\tilde{\bm{r}}$ and $\widetilde{\bm{W}}$, and constraints \eqref{P0b}, \eqref{P0c}, and \eqref{P0e} are non-convex w.r.t. $\tilde{\bm{r}}$. Moreover, the coupling between $\tilde{\bm{r}}$ and $\widetilde{\bm{W}}$ in the objective function \eqref{P0a} significantly increases the difficulty of solving problem (P0).
	
	\section{Proposed Algorithm}
	
	The formulated problem (P0) cannot be optimally solved efficiently in general. To obtain suboptimal solutions, alternating optimization between $\tilde{\bm{r}}$ and $\widetilde{\bm{W}}$ can be applied, but it usually results in locally optimal solutions that may perform far from the optimal solution. For instance, given the precoding matrices at the BS designed based on the channel vectors between the BS and users, the positions of MAs often exhibit minimal variation between successive iterations. This occurs because channel vectors corresponding to other MAs' positions may not align well with the precoding matrices, leading to reduced effective channel gains for users and increased interference among them. To address this issue, we propose utilizing the ZF precoding design for communication users, which expresses the precoding matrices $\widetilde{\bm{W}}$ as functions of the MAs' positions $\tilde{\bm{r}}$. Specifically, in the $q$th realization of users' channels/locations, the ZF precoding matrix for maximizing the minimum achievable rate among all users is given by
	\begin{align}
		\bm{W}^{q}_{\rm ZF} = \sqrt{P}\frac{\bm{H}(\tilde{\bm{r}},\tilde{\bm{u}}^{q})(\bm{H}(\tilde{\bm{r}},\tilde{\bm{u}}^{q})^{\mathsf H}\bm{H}(\tilde{\bm{r}},\tilde{\bm{u}}^{q}))^{-1}}{\|\bm{H}(\tilde{\bm{r}},\tilde{\bm{u}}^{q})(\bm{H}(\tilde{\bm{r}},\tilde{\bm{u}}^{q})^{\mathsf H}\bm{H}(\tilde{\bm{r}},\tilde{\bm{u}}^{q}))^{-1}\|_{\rm F}},
	\end{align}
	which yields the same SINR (or SNR) for all users as
	\begin{align}
		\gamma^q(\tilde{\bm{r}},\tilde{\bm{u}}^{q}) &= \frac{P}{\|\bm{H}(\tilde{\bm{r}},\tilde{\bm{u}}^{q})(\bm{H}(\tilde{\bm{r}},\tilde{\bm{u}}^{q})^{\mathsf H}\bm{H}(\tilde{\bm{r}},\tilde{\bm{u}}^{q}))^{-1}\|_{\rm F}^2\sigma^2} \notag\\ 
		&= \frac{P}{{\rm{Tr}}((\bm{H}(\tilde{\bm{r}},\tilde{\bm{u}}^{q})^{\mathsf H}\bm{H}(\tilde{\bm{r}},\tilde{\bm{u}}^{q}))^{-1})\sigma^2}.
	\end{align}
	Thus, the minimum achievable rate of $K$ users is given by
	\begin{align}
		\min_{k} R_k(\tilde{\bm{r}},\bm{W}^{q}_{\rm ZF},\bm{u}_k^{q}) = \log_2\left( 1+\gamma^q(\tilde{\bm{r}},\tilde{\bm{u}}^{q}) \right).
	\end{align}	
	Then, problem (P0) is simplified as the following problem:
	\begin{subequations}
		\begin{align}
			\textrm {(P1)} \quad \max_{\tilde{\bm{r}}} \quad & \frac{1}{Q} \sum_{q=1}^{Q} \log_2\left( 1+\gamma^q(\tilde{\bm{r}},\tilde{\bm{u}}^{q}) \right) \label{P1a} \\
			\text{s.t.} \quad & \eqref{P0b},\eqref{P0c},\eqref{P0d},\eqref{P0e}.\notag
		\end{align}
	\end{subequations}
	To address the coupling between the two components of $\tilde{\bm{r}}$, i.e., the horizontal APV $\bm{y}$ and the vertical APV $\bm{z}$, in constraints \eqref{P0b} and \eqref{P0c}, an alternating optimization algorithm is introduced to obtain locally optimal solutions for problem (P1). Specifically, the proposed algorithm alternates between two subproblems of (P1), where one of the horizontal APV $\bm{y}$ and the vertical APV $\bm{z}$ is the optimization variable in each subproblem with the other being fixed. Next, we present the detailed algorithm for solving problem (P1).
	
	\subsection{Optimization of $\bm{y}$ with Given $\bm{z}$}
	In this subsection, our objective is optimizing the horizontal APV $\bm{y}$ with the vertical APV $\bm{z}$ being fixed. Let $\tilde{R}(\bm{y},\bm{z}) = \frac{1}{Q} \sum_{q=1}^{Q} \log_2\left( 1+\gamma^q(\tilde{\bm{r}},\tilde{\bm{u}}^{q}) \right)$ denote the objective function of problem (P1). Accordingly, the optimization problem w.r.t. $\bm{y}$ can be written as
	\begin{subequations}
		\begin{align}
			\textrm {(P2)} \quad \max_{\bm{y}} \quad & \tilde{R}(\bm{y},\bm{z}) \label{P2a} \\
			\text{s.t.} \quad & \eqref{P0b},\eqref{P0c},\eqref{P0d},\eqref{P0e}.\notag
		\end{align}
	\end{subequations}
	Since the constraints \eqref{P0b}, \eqref{P0c}, and \eqref{P0e} are non-convex w.r.t. $\bm{y}$, we relax them by leveraging the successive optimization technique.
	First, to relax the non-convex constraints \eqref{P0b} and \eqref{P0c}, we rewrite them into the standard quadratic form. Specifically, ${\rm{var}}(\bm{y})$, ${\rm{var}}(\bm{z})$, and ${\rm{cov}}(\bm{y},\bm{z})$ can be expressed as
	\begin{align}
		{\rm{var}}(\bm{y}) &\triangleq \bm{y}^{\mathsf T} \bm{B} \bm{y}, \notag\\
		{\rm{var}}(\bm{z}) &\triangleq \bm{z}^{\mathsf T} \bm{B} \bm{z}, \notag\\
		{\rm{cov}}(\bm{y},\bm{z}) &\triangleq \bm{y}^{\mathsf T} \bm{B} \bm{z},
	\end{align}
	where $\bm{B}\triangleq \frac{1}{N}\bm{I}_N-\frac{1}{N^2}\bm{1}_N\bm{1}_N^{\mathsf T}$ is a positive semi-definite (PSD) matrix. Consequently, \eqref{P0b} and \eqref{P0c} are equivalently transformed to
	\begin{subequations}
		\begin{align}
			& \bm{y}^{\mathsf T}\bm{B}\bm{y}\bm{z}^{\mathsf T}\bm{B}\bm{z} - \left(\bm{y}^{\mathsf T}\bm{B}\bm{z}\right)^2 \geq \bar{\eta}\bm{z}^{\mathsf T}\bm{B}\bm{z}, \label{xBx}\\
			& \bm{z}^{\mathsf T}\bm{B}\bm{z}\bm{y}^{\mathsf T}\bm{B}\bm{y} - \left(\bm{y}^{\mathsf T}\bm{B}\bm{z}\right)^2 \geq \bar{\eta}\bm{y}^{\mathsf T}\bm{B}\bm{y}. \label{yBy}
		\end{align}
	\end{subequations}
	Given $\bm{y}^p=[y_1^p,y_2^p,\ldots,y_N^p]^{\mathsf T}\in{\mathbb{R}^{N\times1}}$ obtained during the $p$th iteration of successive optimization, since $\bm{y}^{\mathsf T}\bm{B}\bm{y}$ is convex w.r.t. $\bm{y}$, it can be globally lower-bounded using its first-order Taylor expansion at $\bm{y}^p$ as
	\begin{align}
		\bm{y}^{\mathsf T}\bm{B}\bm{y} &\geq (\bm{y}^p)^{\mathsf T}\bm{B}\bm{y}^p + 2(\bm{y}^p)^{\mathsf T} \bm{B} (\bm{y}-\bm{y}^p) \notag\\
		&= 2(\bm{y}^p)^{\mathsf T} \bm{B} \bm{y} - (\bm{y}^p)^{\mathsf T}\bm{B}\bm{y}^p. \notag
	\end{align}
	Moreover, for a fixed $\bm{z}$, $\left(\bm{y}^{\mathsf T}\bm{B}\bm{z}\right)^2 = \bm{y}^{\mathsf T}(\bm{B}\bm{z}\bm{z}^{\mathsf T}\bm{B})\bm{y}$ is a convex quadratic function w.r.t. $\bm{y}$. Accordingly, in the $p$th iteration of successive optimization, constraints \eqref{xBx} and \eqref{yBy} are relaxed as convex quadratic constraints w.r.t. $\bm{y}$ as
	\begin{subequations}
		\begin{align}
			& \left(2(\bm{y}^p)^{\mathsf T} \bm{B} \bm{y} - (\bm{y}^p)^{\mathsf T}\bm{B}\bm{y}^p\right)\bm{z}^{\mathsf T}\bm{B}\bm{z} - \left(\bm{y}^{\mathsf T}\bm{B}\bm{z}\right)^2 \geq \bar{\eta}\bm{z}^{\mathsf T}\bm{B}\bm{z}, \label{xBx2}\\
			& \left(2(\bm{y}^p)^{\mathsf T} \bm{B} \bm{y} - (\bm{y}^p)^{\mathsf T}\bm{B}\bm{y}^p\right)\bm{z}^{\mathsf T}\bm{B}\bm{z} - \left(\bm{y}^{\mathsf T}\bm{B}\bm{z}\right)^2 \geq \bar{\eta}\bm{y}^{\mathsf T}\bm{B}\bm{y}. \label{yBy2}
		\end{align}
	\end{subequations}
	
	Furthermore, according to the Cauchy-Schwartz inequality, i.e., $\bm{u}^{\mathsf T}\bm{v}\leq\|\bm{u}\|_2\|\bm{v}\|_2$ for any two vectors $\bm{u}$ and $\bm{v}$ of equal size, a linear surrogate function that globally minorizes $\|\bm{r}_n-\bm{r}_m\|_2$ at $(\bm{r}_n^p-\bm{r}_m^p)$ is constructed by setting $\bm{u}\leftarrow\bm{r}_n-\bm{r}_m$ and $\bm{v}\leftarrow\bm{r}_n^p-\bm{r}_m^p$, i.e., 
	\begin{align}
		&\|\bm{r}_n-\bm{r}_m\|_2\geq \frac{\left(\bm{r}_n^p-\bm{r}_m^p\right)^{\mathsf T}\left(\bm{r}_n-\bm{r}_m\right)}{\|\bm{r}_n^p-\bm{r}_m^p\|_2},
	\end{align}
	where $\bm{r}_n^p=[y_n^p,z_n]^{\mathsf T}$. Thus, constraint \eqref{P0e} can be relaxed into a linear form expressed as
	\begin{align}\label{38}
		\frac{\left(\bm{r}_n^p-\bm{r}_m^p\right)^{\mathsf T}\left(\bm{r}_n-\bm{r}_m\right)}{\|\bm{r}_n^p-\bm{r}_m^p\|_2}\geq D_0,~~ 1\leq n<m\leq N,
	\end{align}
	which can further be reformulated into the standard linear constraint w.r.t. $\bm{y}$ as
	\begin{align}\label{39}
		\bm{D}\bm{y}\geq \bm{g},
	\end{align}
	where $\bm{g}\in\mathbb{R}^{N(N-1)/2\times1}$ and the non-zero elements of the sparse matrix $\bm{D}\in\mathbb{R}^{N(N-1)/2\times N}$ are defined as
	\begin{align}
		&\bm{g}[\rho(m,n)] = D_0\|\bm{r}_n^p-\bm{r}_m^p\|_2 - (z_n-z_m)^2, \notag\\
		&\bm{D}[\rho(m,n),n] = y_n^p - y_m^p, \\
		&\bm{D}[\rho(m,n),m] = y_m^p - y_n^p,~~ 1\leq n<m\leq N, \notag
	\end{align}	
	where $\rho(m,n)\triangleq (2N-n)(n-1)/2+m-n$. Hereto, in the $p$th iteration, the optimization of $\bm{y}$ is relaxed as
	\begin{subequations}
		\begin{align}
			\textrm {(P3)}~~\max_{\bm{y}} \quad & \tilde{R}(\bm{y},\bm{z}) \label{P3a}\\
			\text{s.t.} \quad & \eqref{xBx2}, \eqref{yBy2}, \eqref{P0d}, \eqref{39}. \notag
		\end{align}
	\end{subequations}
	Since the constraints of problem (P3) are convex w.r.t. $\bm{y}$, problem (P3) can be efficiently solved by using feasible direction methods \cite{clarkson2010coresets}. Specifically, in the $p$th iteration, we first solve the following ascent direction finding subproblem:
	\begin{subequations}
		\begin{align}
			\textrm {(P3-1)}~~\max_{\bm{d}} \quad & \bm{d}^{\mathsf T}\nabla_{\bm{y}}\tilde{R}(\bm{y}^p,\bm{z}) \label{P31a}\\
			\text{s.t.} \quad & \left(2(\bm{y}^p)^{\mathsf T} \bm{B} \bm{d} - (\bm{y}^p)^{\mathsf T}\bm{B}\bm{y}^p\right)\bm{z}^{\mathsf T}\bm{B}\bm{z} - \left(\bm{d}^{\mathsf T}\bm{B}\bm{z}\right)^2 \notag\\
			&~~\geq \bar{\eta}\bm{z}^{\mathsf T}\bm{B}\bm{z}, \label{P31b}\\
			& \left(2(\bm{y}^p)^{\mathsf T} \bm{B} \bm{d} - (\bm{y}^p)^{\mathsf T}\bm{B}\bm{y}^p\right)\bm{z}^{\mathsf T}\bm{B}\bm{z} - \left(\bm{d}^{\mathsf T}\bm{B}\bm{z}\right)^2  \notag\\
			&~~\geq \bar{\eta}\bm{d}^{\mathsf T}\bm{B}\bm{d}, \label{P31c}\\
			& \bm{r}_n(\bm{d})\in\mathcal{C},~~n=1,2,\ldots,N,\label{P31d}\\
			&\bm{D}\bm{d}\geq \bm{g}, \label{P31e}
		\end{align}
	\end{subequations}
	where $\bm{r}_n(\bm{d})\triangleq[\bm{d}[n],z_n]^{\mathsf T}$ and $\nabla_{\bm{y}}\tilde{R}(\bm{y}^p,\bm{z}) \in\mathbb{R}^{N\times1}$ denotes the gradient of the function $\tilde{R}(\bm{y},\bm{z})$ at the point $\bm{y}^p$. The $n$th element of $\nabla_{\bm{y}}\tilde{R}(\bm{y}^p,\bm{z})$ can be calculated as
	\begin{align}\label{gradient}
		\nabla_{\bm{y}}\tilde{R}(\bm{y}^p,\bm{z})[n] = \lim_{\xi\rightarrow0} \frac{\tilde{R}(\bm{y}^p+\xi\bm{e}_n,\bm{z}) - \tilde{R}(\bm{y}^p,\bm{z})}{\xi},
	\end{align}
	where $\bm{e}_n \in \mathbb{R}^{N\times1}$ is a vector with the $n$th entry equal to one and all other entries equal to zero. Problem (P3-1) is a convex optimization problem because constraints \eqref{P31b} and \eqref{P31c} are convex quadratic w.r.t. $\bm{d}$, while the objective function \eqref{P31a} and constraint \eqref{P31e} are linear w.r.t. $\bm{d}$. Furthermore, for typical circular or rectangular regions $\mathcal{C}$, where constraint \eqref{P31d} is convex quadratic or linear w.r.t. $\bm{d}$, problem (P4) is a convex quadratically-constrained quadratic programming (QCQP) problem. Such problems can be efficiently solved using existing optimization toolboxes, such as MATLAB's built-in fmincon function.
	
	Next, we determine the step size $\tau \in[0,1]$ by solving the following one-dimensional search problem:
	\begin{align}\label{tau}
		\tau = \arg\max_{\bar{\tau}\in[0,1]} \tilde{R}(\bm{y}^p + \bar{\tau}(\bm{d}-\bm{y}^p),\bm{z}),
	\end{align}
	which can be addressed through exhaustive search for $\bar{\tau}$ within the interval $[0,1]$. Finally, $\bm{y}^{p+1}$ is updated as
	\begin{align}\label{yp1}
		\bm{y}^{p+1} = \bm{y}^p + \tau(\bm{d}-\bm{y}^p).
	\end{align}
	
	The details of the proposed algorithm for solving problem (P3) are summarized in Algorithm~\ref{alg1}. Specifically, in step 4, the gradient $\nabla_{\bm{y}}\tilde{R}(\bm{y}^p,\bm{z})$ is computed via \eqref{gradient}, and then we solve the convex optimization problem (P3-1) to obtain $\bm{d}$ in step 5. Subsequently, we obtain the step size $\tau$ via \eqref{tau}, and then update $\bm{y}^{p+1}$ via \eqref{yp1}. The algorithm terminates when the increase in the objective value $\tilde{R}(\bm{y}^p,\bm{z})$ falls below the predefined convergence threshold $\epsilon_2$. Finally, the horizontal APV $\bm{y}$ is provided as the output in step 10.
	
	\begin{algorithm}[!t]
		\caption{Successive Optimization for Solving Problem (P3)}
		\label{alg1}
		\begin{algorithmic}[1]
			\STATE \emph{Input:} $N$, $\mathcal{C}$, $D_0$, $\epsilon_2$, $\bar{\eta}$, $\bm{z}$, $\bm{y}^0$.
			\STATE Initialization:  $p \leftarrow 0$.
			\WHILE{Increase of $\tilde{R}(\bm{y}^p,\bm{z})$ is above $\epsilon_2$}
			\STATE Compute $\nabla_{\bm{y}}\tilde{R}(\bm{y}^p,\bm{z})$ via \eqref{gradient}.
			\STATE Obtain $\bm{d}$ by solving problem (P3-1).
			\STATE Obtain $\tau$ via \eqref{tau}.
			\STATE Update $\bm{y}^{p+1}$ via \eqref{yp1}.
			\STATE $p \leftarrow p+1$.
			\ENDWHILE
			\STATE \emph{Output:} $\bm{y} \leftarrow \bm{y}^{p+1}$.
		\end{algorithmic}
	\end{algorithm}
	
	Next, we analyze the convergence of the proposed Algorithm~\ref{alg1}. In the $p$th iteration of solving problem (P3), we have $\tilde{R}(\bm{y}^p,\bm{z})\leq \tilde{R}(\bm{y}^{p+1},\bm{z})$, where the inequality is valid according to \eqref{tau} and \eqref{yp1}, and the equality is achieved by letting $\bm{y}^{p+1}=\bm{y}^p$. Consequently, the sequence $\{\tilde{R}(\bm{y}^p,\bm{z})\}_{p=0}^{\infty}$ is non-decreasing and converges to a maximum value. Therefore, the convergence of Algorithm~\ref{alg1} is guaranteed.
	
	The computational complexity of Algorithm~\ref{alg1} is analyzed as follows. Specifically, in step 4, the complexity for calculating $\nabla_{\bm{y}}\tilde{R}(\bm{y}^p,\bm{z})$ is $\mathcal{O}(QNK^2(N+K))$. In step 5, solving the convex optimization problem (P3-1) requires a complexity of $\mathcal{O}(N^4\ln(1/\varepsilon))$, where $\varepsilon$ denotes the accuracy for the interior-point method. In step 6, the complexity for one-dimensional linear searching $\tau$ is $\mathcal{O}(M_{\tau}QK^2(N+K))$, where $M_{\tau}$ is the number of discretizations of interval $[0,1]$. Let $I_{\bm{y}}$ represent the maximum number of iterations for executing steps 4-8. Accordingly, the total computational complexity of Algorithm~\ref{alg1} is $\mathcal{O}(((M_{\tau}+N)QK^2(N+K) + N^4\ln(1/\varepsilon))I_{\bm{y}})$, which is polynomial w.r.t. $N$, $K$, and $Q$.
	
	\subsection{Optimization of $\bm{z}$ with Given $\bm{y}$}
	In this subsection, we aim to optimize $\bm{z}$	in problem (P1) with given $\bm{y}$. Accordingly, the optimization problem w.r.t. $\bm{z}$ can be written as
	\begin{subequations}
		\begin{align}
			\textrm {(P4)} \quad \max_{\bm{z}} \quad & \tilde{R}(\bm{y},\bm{z}) \label{P4a} \\
			\text{s.t.} \quad & \eqref{P0b},\eqref{P0c},\eqref{P0d},\eqref{P0e}.\notag
		\end{align}
	\end{subequations}
	Since problem (P4) has a similar structure as (P2), Algorithm~\ref{alg1} can be adapted to obtain the vertical APV $\bm{z}$ by substituting $\left\{ \bm{y}, \bm{z}  \right\}$
	with $\left\{ \bm{z}, \bm{y}  \right\}$.
	
	Similarly, the monotonic convergence of Algorithm~\ref{alg1} is guaranteed when applied to solve problem (P4). Accordingly, the computational complexity is $\mathcal{O}(((M_{\tau}+N)QK^2(N+K) + N^4\ln(1/\varepsilon))I_{\bm{z}})$, where $I_{\bm{z}}$ denotes the maximum number of iterations for executing steps 4-8.
	
	\subsection{Overall Algorithm}
	With the solutions to problems (P2) and (P4) obtained above, we have the complete alternating optimization algorithm to solve (P1). The overall algorithm is summarized in Algorithm~\ref{alg2}. Specifically, in step 4, the horizontal APV $\bm{y}$ is optimized for a given vertical APV $\bm{z}$ by solving problem (P2) through successive optimization. Similarly, in step 5, the vertical APV $\bm{z}$ is optimized by solving problem (P4). These two subproblems are iteratively solved until the improvement in the objective value of \eqref{P1a} falls below a predefined convergence threshold, $\epsilon_1$.
	
	\begin{algorithm}[!t]
		\caption{Alternating Optimization for Solving Problem (P1)}
		\label{alg2}
		\begin{algorithmic}[1]
			\STATE \emph{Input:} $N$, $P$, $\sigma^2$, $\mathcal{C}$, $D_0$, $\epsilon_1$, $\epsilon_2$, $\bar{\eta}$.
			\STATE Initialize $\bm{y}$ and $\bm{z}$.
			\WHILE{Increase of the objective value in \eqref{P1a} is above $\epsilon_1$}
			\STATE Given $\bm{z}$, solve problem (P2) to update $\bm{y}$.
			\STATE Given $\bm{y}$, solve problem (P4) to update $\bm{z}$.
			\ENDWHILE
			
			\STATE \emph{Output:} $\bm{y}$, $\bm{z}$.
		\end{algorithmic}
	\end{algorithm}
	
	Next, we analyze the convergence of Algorithm~\ref{alg2}. The alternating optimization of the variables $\bm{y}$ and $\bm{z}$ ensures that the algorithm generates a non-decreasing sequence of objective values for (P1) during the iterations, which will not diverge to infinity due to the inherent limitations on the achievable rate. Since the convergence criterion of Algorithm~\ref{alg2} is defined as the inability to further increase the objective value of (P1) by optimizing $\bm{y}$ or $\bm{z}$, Algorithm~\ref{alg2} is guaranteed to converge to at least a locally optimal solution of (P1). The total computational complexity is given by $\mathcal{O}(((M_{\tau}+N)QK^2(N+K) + N^4\ln(1/\varepsilon))(I_{\bm{y}} + I_{\bm{z}})I)$, where $I$ represents the maximum number of outer iterations for steps 3-6 in Algorithm~\ref{alg2}.

	\subsection{Initialization}
	In this subsection, we propose an initialization scheme for the design of MAs' positions in Algorithm~\ref{alg2}. To satisfy the given constraints \eqref{P0b} and \eqref{P0c} on the maximum acceptable CRB thresholds for $\overline{{\rm{CRB}}}_u(\tilde{\bm{r}})$ and $\overline{{\rm{CRB}}}_v(\tilde{\bm{r}})$, we solve the following CRB minimization problem:
	\begin{subequations}
		\begin{align}
			\textrm {(P5)} \quad \max_{\tilde{\bm{r}},\tilde{\eta}} \quad & \tilde{\eta} \label{P5a} \\
			\text{s.t.} \quad & {\rm{var}}(\bm{y})-\frac{{\rm{cov}}(\bm{y},\bm{z})^2}{{\rm{var}}(\bm{z})} \geq \tilde{\eta}, \label{P5b}\\
			&{\rm{var}}(\bm{z})-\frac{{\rm{cov}}(\bm{y},\bm{z})^2}{{\rm{var}}(\bm{y})} \geq \tilde{\eta}, \label{P5c}\\ 
			&\eqref{P0d},\eqref{P0e}, \notag
		\end{align}
	\end{subequations}
	which can be efficiently solved via alternatively optimizing one of $\bm{y}$ and $\bm{z}$ with the other being fixed \cite{ma2024MAsensing}. The resulting solution is then set as the initial MAs' positions $\{\bm{y}^0,\bm{z}^0\}$.
	
	\subsection{Upper-bound of $\tilde{R}(\tilde{\bm{r}})$ and Lower-bound of $\eta$}
	Finally, we analyze the upper-bound of $\tilde{R}(\tilde{\bm{r}})$ and the lower-bound of the CRB threshold $\eta$. Specifically, the received SINR for the $k$th user achieves its upper-bound when there is no interference from other users. In this scenario, the maximal ratio transmission (MRT), defined as $\bm{w}_k = \sqrt{p_k}\bm{h}_k(\tilde{\bm{r}}, \bm{u}_k)/\|\bm{h}_k(\tilde{\bm{r}}, \bm{u}_k)\|_2$, can maximize the minimum achievable rate among all users, where $p_k$ is the transmit power allocated to the signal for user $k$. Consequently, the upper-bound of the SINR (or SNR) for the $k$th user is expressed as
	\begin{align}
		\gamma_{k}(\tilde{\bm{r}},\bm{W},\bm{u}_k)&\leq \frac{|\bm{w}_k^{\mathsf H}\bm{h}_{k}(\tilde{\bm{r}},\bm{u}_k)|^2}{\sigma^2} =\frac{p_k \|\bm{h}_{k}(\tilde{\bm{r}},\bm{u}_k)\|_2^2}{\sigma^2}\notag\\
		&=\frac{p_k \lambda^2N}{16\pi^2d(\bm{u}_k)^2\sigma^2}. 
	\end{align}
	To maximize the minimum SINR of all users, the optimal power allocation should ensure equal SINR for all users. This condition yields the following equations for determining $\{p_k\}_{k=1}^K$:
	\begin{align}
		\left\{
		\begin{aligned}
			&\frac{p_k \lambda^2N}{16\pi^2d(\bm{u}_k)^2\sigma^2} =  \bar{\gamma}(\tilde{\bm{u}}), \\
			&\sum_{k=1}^{K} p_k =  P, \\
		\end{aligned}
		\right.
	\end{align}
	where $\bar{\gamma}(\tilde{\bm{u}})$ denotes the upper bound of the minimum SINR among all users. It can be derived as
	\begin{align}
		\bar{\gamma}(\tilde{\bm{u}}) = \frac{P\lambda^2N}{16\pi^2\sigma^2}\left(\sum_{k=1}^{K} d(\bm{u}_k)^2 \right)^{-1}.
	\end{align}
	Therefore, the upper-bound of $\tilde{R}(\tilde{\bm{r}})$ is expressed as
	\begin{align}\label{upper}
		\tilde{R}(\tilde{\bm{r}})\leq \frac{1}{Q}\sum_{q=1}^{Q} \log_2(1+\bar{\gamma}(\tilde{\bm{u}}^{q})).
	\end{align}

	Furthermore, let $A^\textrm{cir}$ denote the radius of the minimum circumscribed circle of $\mathcal{C}$. The lower-bound of $\eta$ is given by \cite{ma2024MAsensing}
	\begin{align}\label{lower}
		\eta \geq \frac{\sigma^2\lambda^2}{8\pi^2P^{\rm{s}}TN\tilde{\beta}(A^\textrm{cir})^2}.
	\end{align}
	For the typical square region $\mathcal{C}^\textrm{squ}$ with size $A \times A$, we have $A^\textrm{cir} = A/\sqrt{2}$. Accordingly, the lower-bound of $\eta$ for the square region $\mathcal{C}^\textrm{squ}$ is $\frac{\sigma^2\lambda^2}{4\pi^2P^{\rm{s}}TN\tilde{\beta}A^2}$.
	
	\begin{figure}[!t]
		\centering
		\subfigure[Dense communication user zone]{
			\begin{minipage}{.47\textwidth}
				\centering
				\includegraphics[scale=.45]{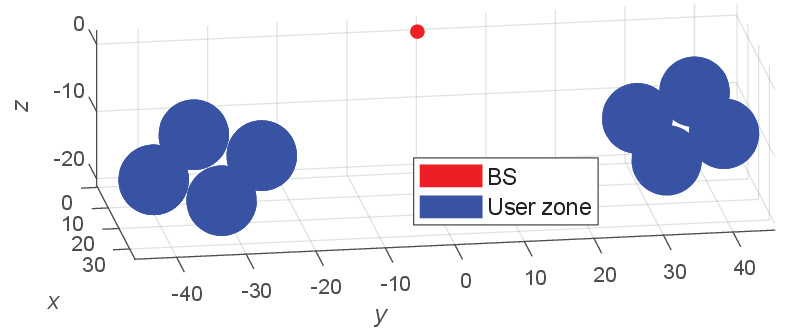}
			\end{minipage}
			\label{user_zone_dense}
		}
		\subfigure[Sparse communication user zone]{
			\begin{minipage}{.47\textwidth}
				\centering
				\includegraphics[scale=.45]{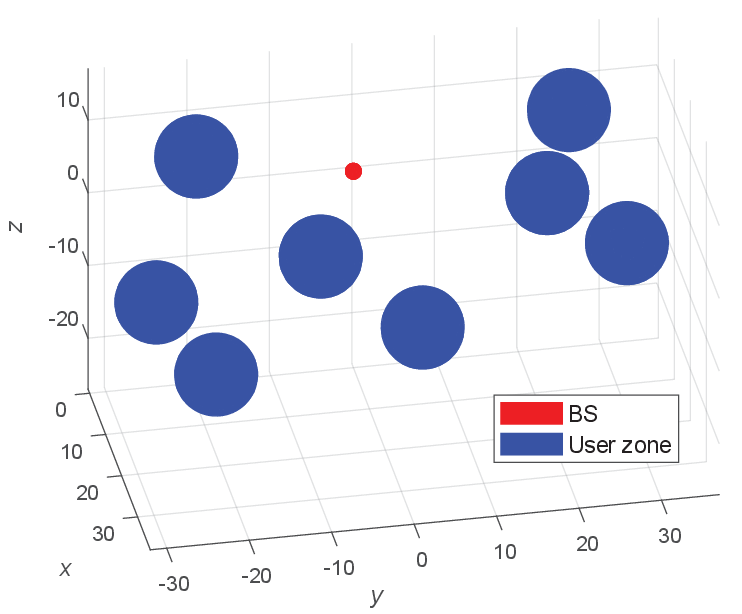}
			\end{minipage}
			\label{user_zone_sparse}
		}
		\caption{Illustration of the considered communication user zones.}
		\label{FIG2}
	\end{figure}

	\begin{figure}[!t]
		\centering
		\includegraphics[width=68mm]{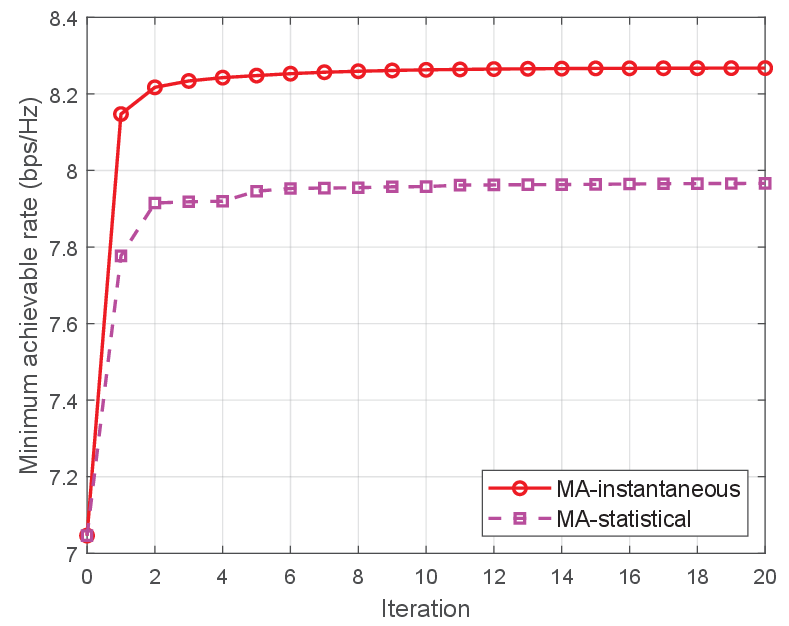}
		\caption{Convergence behavior of Algorithm~\ref{alg2}.}
		\label{convergence}
	\end{figure}
	
	\section{Numerical Results}

	\begin{figure*}[!t]
		\centering
		\subfigure[$\eta=0.02$]{
			\begin{minipage}{.31\textwidth}
				\centering
				\includegraphics[scale=.46]{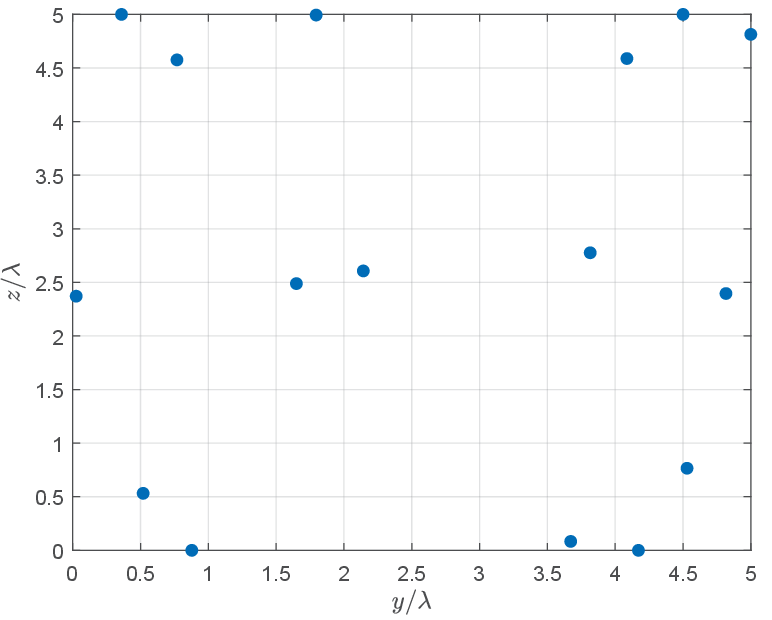}
			\end{minipage}
			\label{position1}
		}
		\subfigure[$\eta=0.002$]{
			\begin{minipage}{.31\textwidth}
				\centering
				\includegraphics[scale=.46]{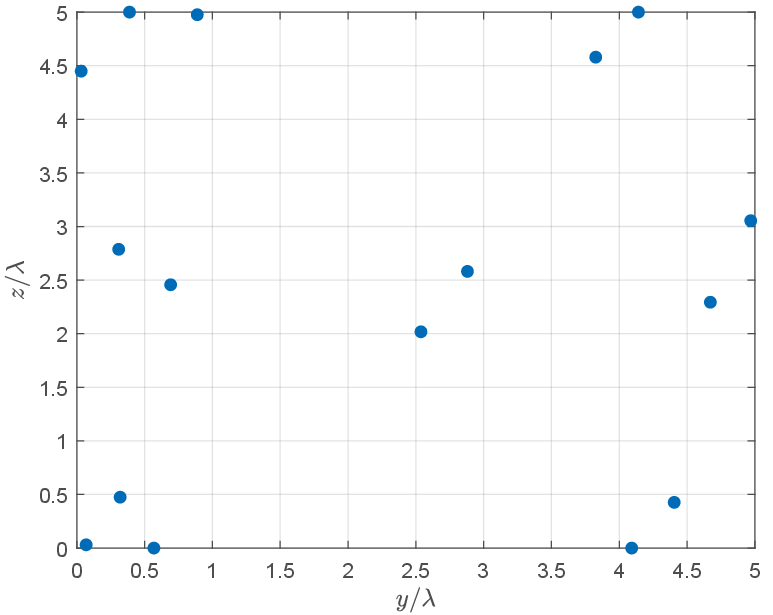}
			\end{minipage}
			\label{position2}
		}
		\subfigure[$\eta=0.001$]{
			\begin{minipage}{.31\textwidth}
				\centering
				\includegraphics[scale=.46]{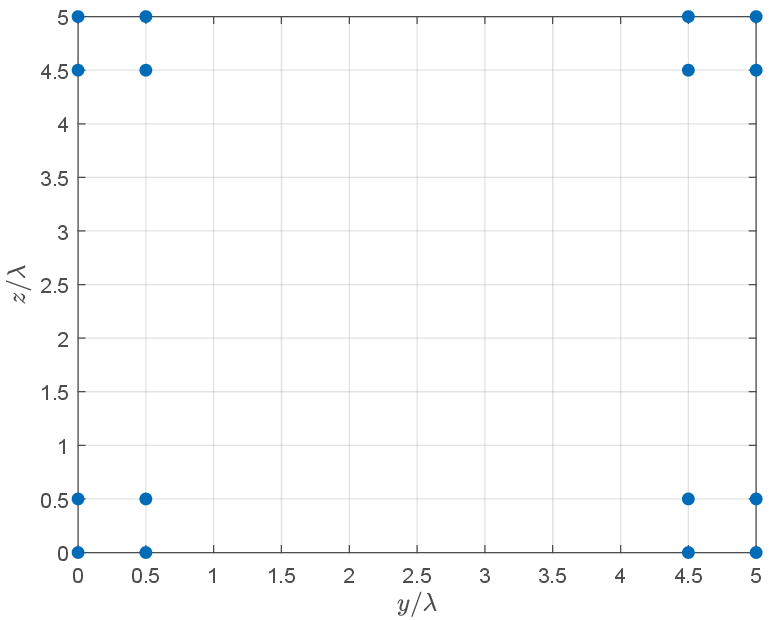}
			\end{minipage}
			\label{position3}
		}
		\caption{Illustration of the MAs’ positions for different CRB thresholds.}
		\label{FIG3}
	\end{figure*}

	\begin{figure}[!t]
		\centering
		\subfigure[Dense communication user zone]{
			\begin{minipage}{.47\textwidth}
				\centering
				\includegraphics[scale=.5]{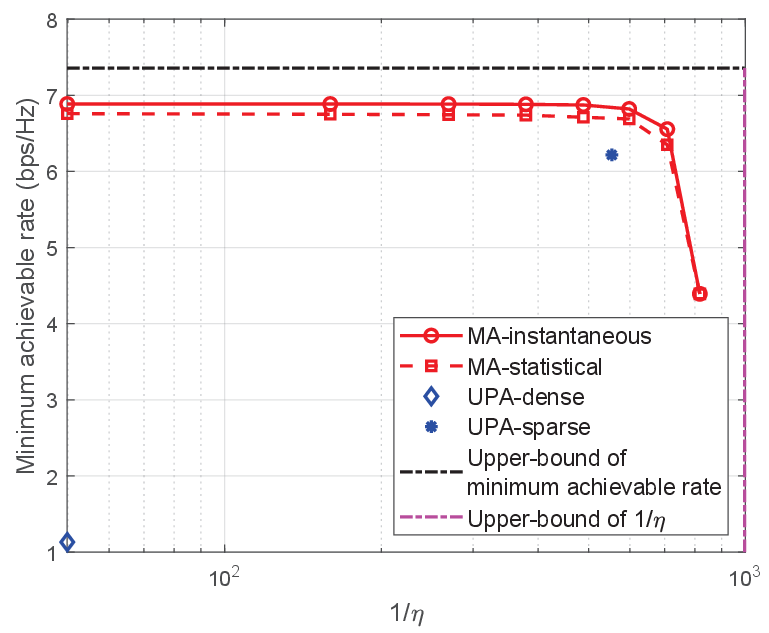}
			\end{minipage}
			\label{region_dense}
		}
		\subfigure[Sparse communication user zone]{
			\begin{minipage}{.47\textwidth}
				\centering
				\includegraphics[scale=.5]{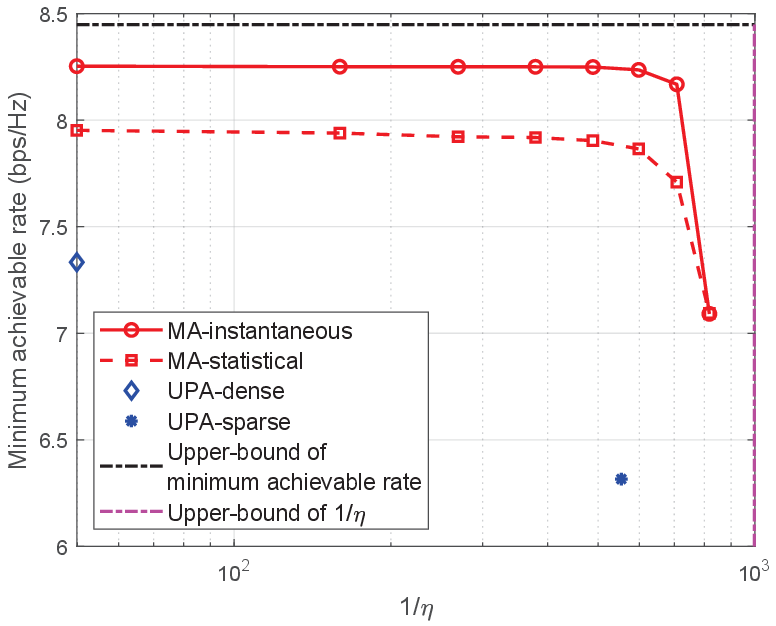}
			\end{minipage}
			\label{region_sparse}
		}
		\caption{Comparison of minimum achievable rate--reciprocal of CRB threshold region for different schemes.}
		\label{FIG4}
	\end{figure}

	\begin{figure*}[!t]
		\centering
		\subfigure[MA-statistical]{
			\begin{minipage}{.31\textwidth}
				\centering
				\includegraphics[scale=.46]{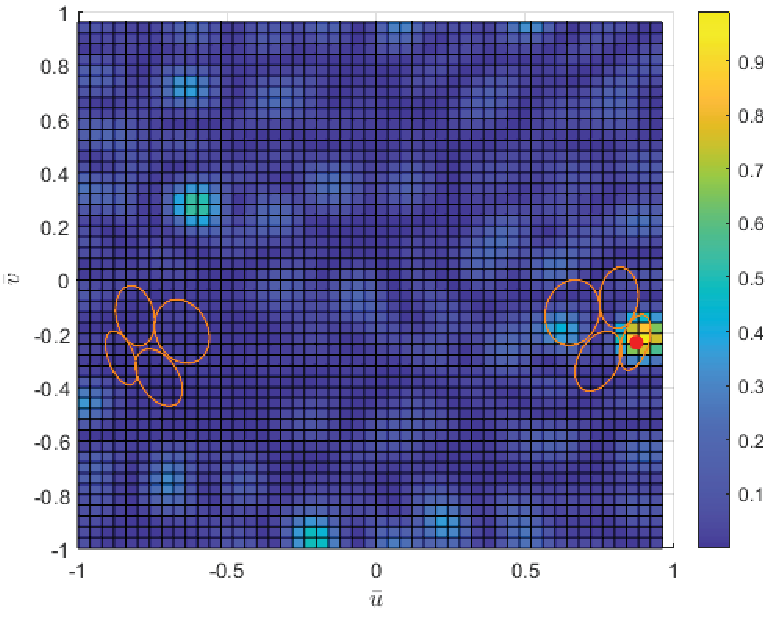}
			\end{minipage}
			\label{correlation_dense_MA}
		}
		\subfigure[UPA-dense]{
			\begin{minipage}{.31\textwidth}
				\centering
				\includegraphics[scale=.46]{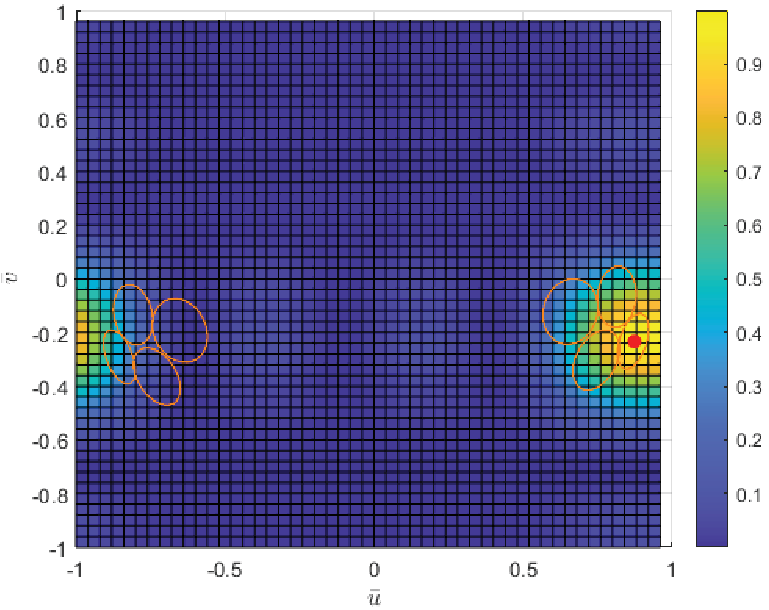}
			\end{minipage}
			\label{correlation_dense_UPA_dense}
		}
		\subfigure[UPA-sparse]{
			\begin{minipage}{.31\textwidth}
				\centering
				\includegraphics[scale=.46]{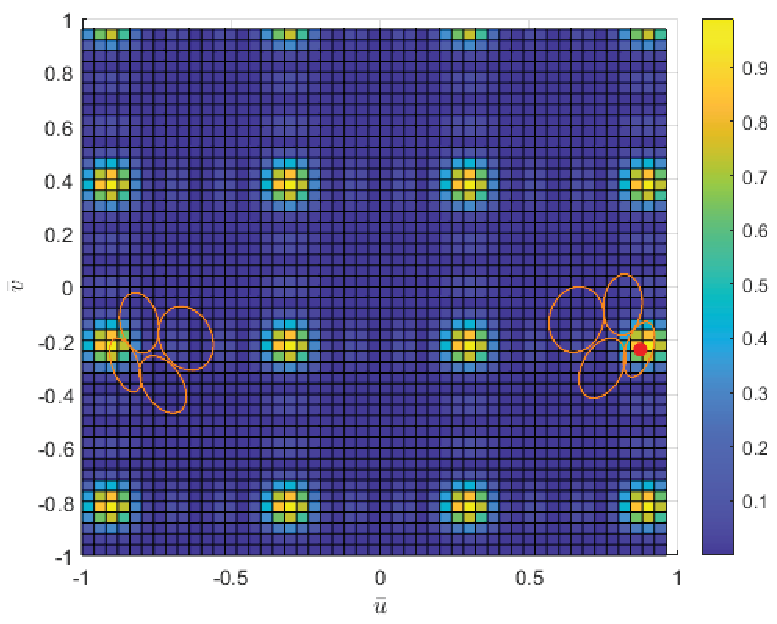}
			\end{minipage}
			\label{correlation_dense_UPA_sparse}
		}
		\caption{Comparison of steering vector correlation by different schemes for dense communication user zone.}
		\label{FIG5}
	\end{figure*}
	
	\begin{figure*}[!t]
		\centering
		\subfigure[MA-statistical]{
			\begin{minipage}{.31\textwidth}
				\centering
				\includegraphics[scale=.46]{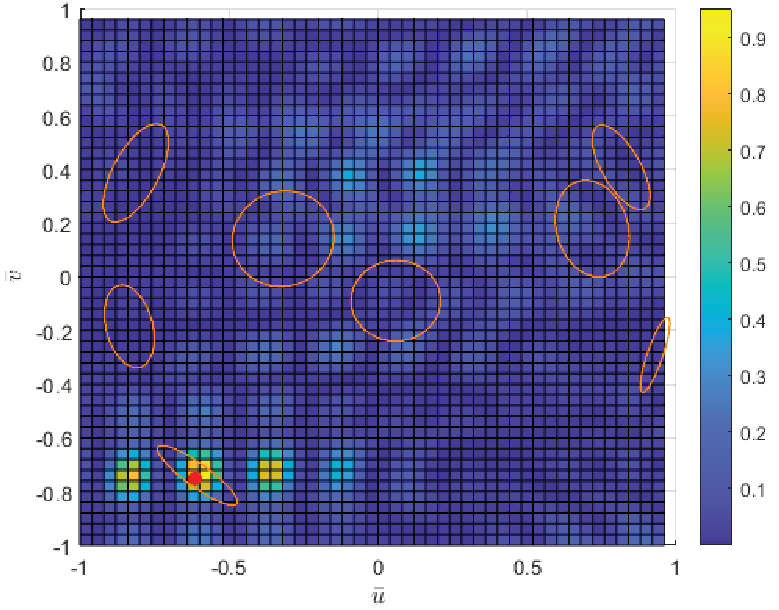}
			\end{minipage}
			\label{correlation_sparse_MA}
		}
		\subfigure[UPA-dense]{
			\begin{minipage}{.31\textwidth}
				\centering
				\includegraphics[scale=.46]{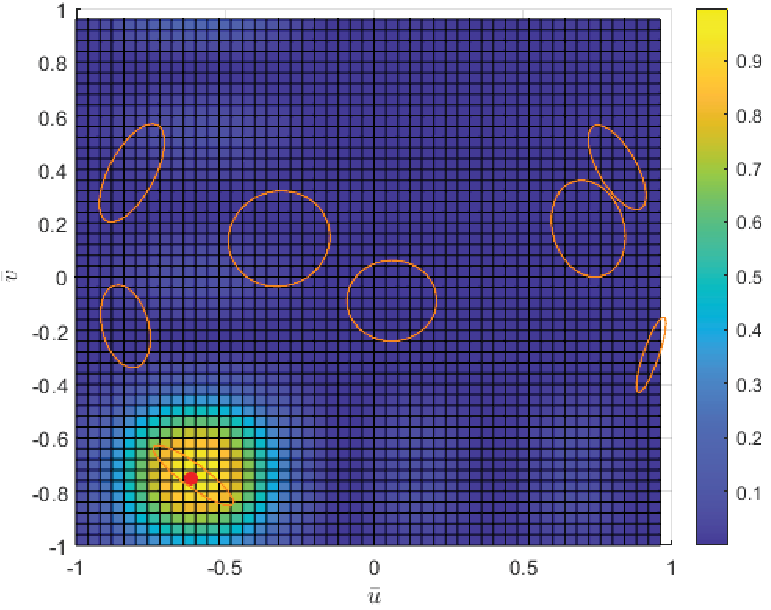}
			\end{minipage}
			\label{correlation_sparse_UPA_dense}
		}
		\subfigure[UPA-sparse]{
			\begin{minipage}{.31\textwidth}
				\centering
				\includegraphics[scale=.46]{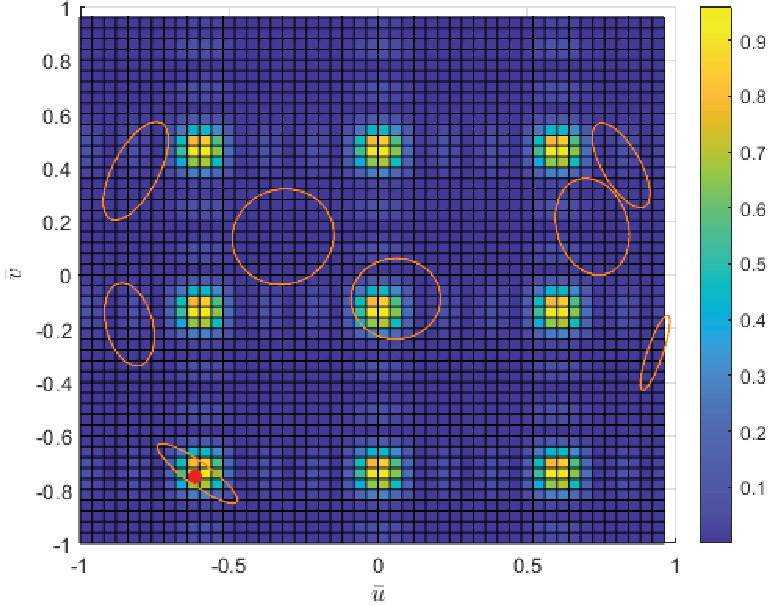}
			\end{minipage}
			\label{correlation_sparse_UPA_sparse}
		}
		\caption{Comparison of steering vector correlation by different schemes for sparse communication user zone.}
		\label{FIG6}
	\end{figure*}

	\begin{figure}[!t]
		\centering
		\subfigure[Dense communication user zone]{
			\begin{minipage}{.47\textwidth}
				\centering
				\includegraphics[scale=.5]{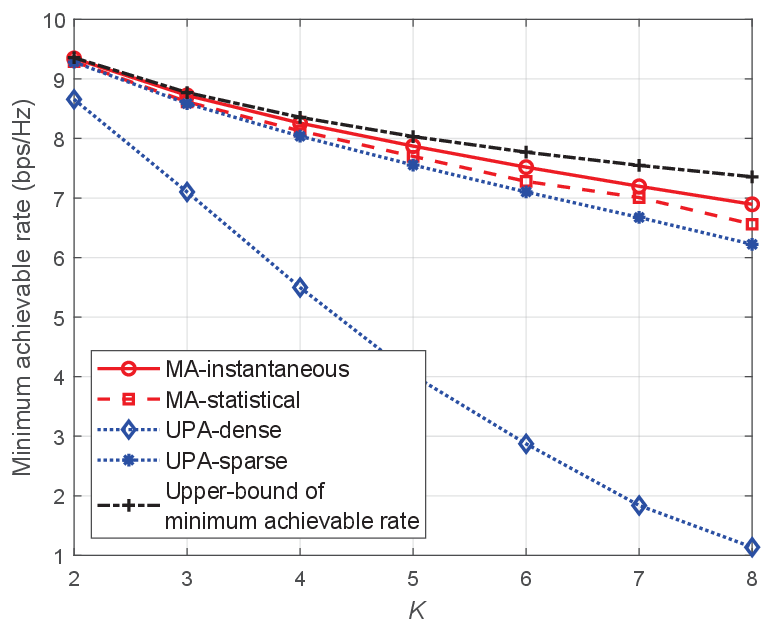}
			\end{minipage}
			\label{K_dense}
		}
		\subfigure[Sparse communication user zone]{
			\begin{minipage}{.47\textwidth}
				\centering
				\includegraphics[scale=.5]{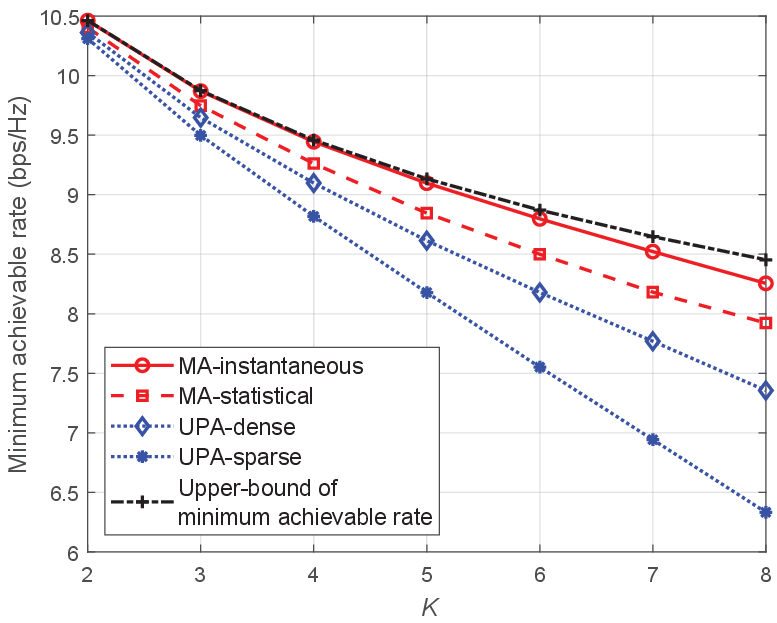}
			\end{minipage}
			\label{K_sparse}
		}
		\caption{Comparison of minimum achievable rate versus $K$ for different schemes.}
		\label{FIG7}
	\end{figure}

	\begin{figure}[!t]
		\centering
		\subfigure[${\rm{MSE}}(u)$]{
			\begin{minipage}{.47\textwidth}
				\centering
				\includegraphics[scale=.5]{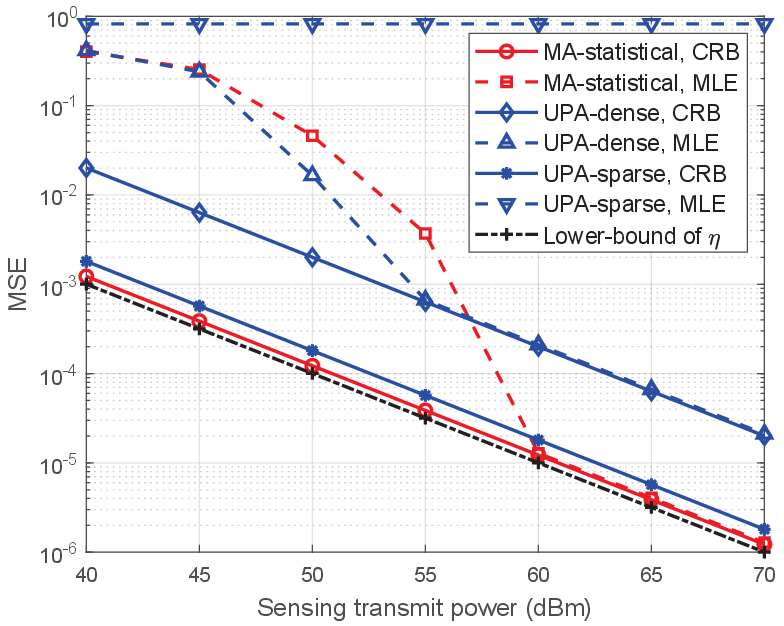}
			\end{minipage}
			\label{MSE_u}
		}
		\subfigure[${\rm{MSE}}(v)$]{
			\begin{minipage}{.47\textwidth}
				\centering
				\includegraphics[scale=.5]{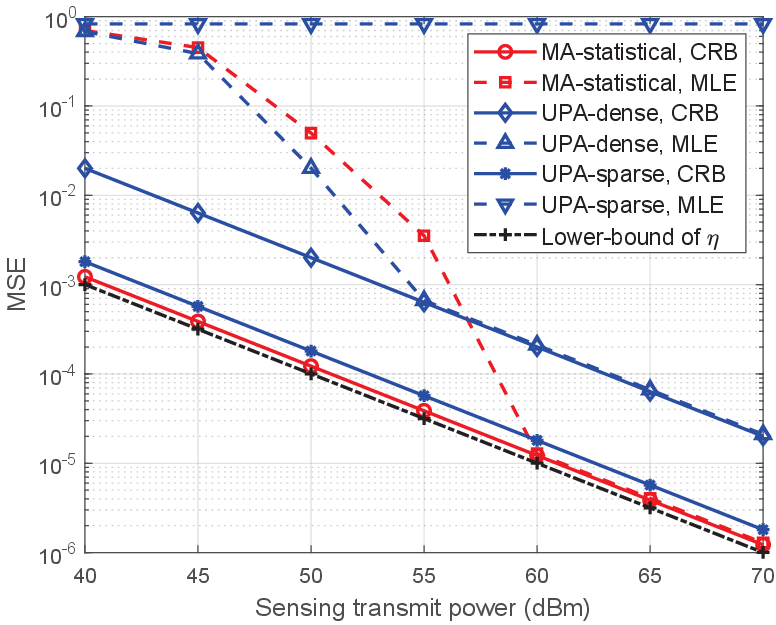}
			\end{minipage}
			\label{MSE_v}
		}
		\caption{Comparison of sensing MSE versus $P^{\rm{s}}$ for different schemes.}
		\label{FIG8}
	\end{figure}

	
	In this section, we present numerical results to evaluate the performance of the proposed MA-aided ISAC system. The antenna moving region is set as a square region of size $A \times A$, with $A = 5\lambda$, $N = 16$, and $\lambda = 0.05$ m. The minimum distance between any two MAs is set as $D_0 = \lambda/2$. For the communication subsystem, the BS's maximum transmit power is set to $P = 20$ dBm, and the noise power is $\sigma^2 = -80$ dBm. We consider $K = 8$ communication users, each randomly distributed within a different 3D sphere of radius $5$ m.	For the sensing subsystem, the transmit power of the probing signal is set to $P^{\rm{s}} = 40$ dBm, and the minimum sensing channel coefficient power is $\tilde{\beta} = 4 \times 10^{-15}$. The target's AoAs are set as $\theta = 45^\circ$ and $\phi = 60^\circ$, leading to $u = \sin \theta \cos \phi = 0.35$ and $v = \cos \theta = 0.71$. The number of snapshots is set to $T = 16$.	In the proposed Algorithm~\ref{alg2}, the convergence thresholds are set as $\epsilon_1 = \epsilon_2 = 10^{-3}$. All numerical results represent the average performance over $Q = 5000$ independent Monte Carlo realizations of users' locations within their respective zones.
	
	
	As illustrated in Fig.~\ref{FIG2}, we consider two typical configurations of communication user zones: a dense communication user zone setup and a sparse communication user zone setup.	In the dense communication user zone setup, all user zones are grouped into two clusters, with the zones within each cluster positioned close to one another. In contrast, in the sparse communication user zone setup, the user zones are spatially well-separated.
	
	To evaluate the performance of the proposed MA scheme based on users' statistical CSI, we consider the following benchmark schemes for comparison: 1) \textbf{MA-instantaneous}: For each realization of users' locations, the positions of MAs are optimized using Algorithm~\ref{alg2} to maximize the instantaneous minimum achievable rate; 2) \textbf{Uniform planar array (UPA)-dense}: The positions $\{\bm{r}_n\}_{n=1}^N$ are configured as a UPA with half-wavelength inter-antenna spacing in both horizontal and vertical directions; 3) \textbf{UPA-sparse}: The positions $\{\bm{r}_n\}_{n=1}^N$ are configured as a UPA with the largest possible aperture, where the inter-antenna spacing is set to $A/\left(\left\lceil\sqrt{N}\right\rceil-1\right)$ in both horizontal and vertical directions. Additionally, we calculate the upper-bound of the expected minimum achievable rate via \eqref{upper} and the lower-bound of the CRB threshold via \eqref{lower}.
	
	First, in Fig.~\ref{convergence}, we show the convergence behavior of the proposed Algorithm~\ref{alg2}. We consider the sparse setup of communication user zone, and set $\eta=0.003$ for target sensing. It is shown that for MA systems based on both users' instantaneous and statistical CSI, the algorithm converges within $10$ iterations, which demonstrates the effectiveness of Algorithm~\ref{alg2}. Moreover, although the MA system based on users' statistical CSI experiences a reduction in the minimum achievable rate among users due to the MAs' positions optimized to cater to the random user locations, it avoids the need for frequent antenna movement required for the MAs' positions optimized based on instantaneous CSI, thereby achieving a favorable balance between communication performance and antenna movement overhead in practical systems.
	
	Next, we show in Fig.~\ref{FIG3} the optimized MAs' positions for different CRB thresholds, $\eta$. We consider the dense setup of communication user zone. For small CRB thresholds, the optimized MAs' positions exhibit symmetry w.r.t. both the $y$-axis and $z$-axis, effectively balancing the estimation accuracy of $u$ and $v$. Moreover, the optimized MAs' positions are as far as possible from the center of the square region to maximize the aperture of the MA array, which enhances angular resolution and improves AoA estimation performance.

	Fig.~\ref{FIG4} compares the minimum achievable rate--reciprocal of CRB threshold region for the proposed and benchmark schemes. It is observed that the MA schemes always outperform other benchmark schemes with FPAs in terms of minimizing sensing CRB	and maximizing the minimum achievable rate. Moreover, the MA-aided ISAC system can achieve flexible trade-offs between sensing and communication performance thanks to	the flexible antenna positioning, whereas the FPA-based ISAC	system cannot alter the communication/sensing performance via antenna positioning. Moreover, new insights can be drawn from the benchmark schemes. When comparing the dense UPA with the sparse UPA, it is shown that increasing the array sparsity reduces the channel correlation among nearby users, thereby enhancing the minimum achievable rate. However, for users located farther away, the increased grating lobes of the sparse UPA can lead to higher channel correlation, ultimately reducing the minimum achievable rate.

	To provide further insights, Fig.~\ref{FIG5} and Fig.~\ref{FIG6} illustrate the steering vector correlation, defined as $\frac{1}{N^2}|\bm{\alpha}(\tilde{\bm{r}},\bar{\bm{n}}^\textrm{zon})^{\mathsf H} \bm{\alpha}(\tilde{\bm{r}},\bar{\bm{\chi}})|^2$, for each scheme versus $\bar{\bm{\chi}} \in [-1,1] \times [-1,1]$. Here, $\bar{\bm{n}}^\textrm{zon} \in \mathbb{R}^{2\times1}$ denotes the 2D wave vector of the center of the first communication user zone.	For the dense setup, the center of the first zone is located at $\bm{n}^\textrm{zon} = [20,41,-11]^{\mathsf T}$ m, resulting in a 2D wave vector of $\bar{\bm{n}}^\textrm{zon} = [\bm{n}^\textrm{zon}[2]/\|\bm{n}^\textrm{zon}\|_2, \bm{n}^\textrm{zon}[3]/\|\bm{n}^\textrm{zon}\|_2]^{\mathsf T} = [0.87, -0.23]^{\mathsf T}$. For the sparse setup, the center of the first zone is located at $\bm{n}^\textrm{zon} = [7,-18,-22]^{\mathsf T}$ m, resulting in a 2D wave vector of $\bar{\bm{n}}^\textrm{zon} = [\bm{n}^\textrm{zon}[2]/\|\bm{n}^\textrm{zon}\|_2, \bm{n}^\textrm{zon}[3]/\|\bm{n}^\textrm{zon}\|_2]^{\mathsf T} = [-0.61, -0.75]^{\mathsf T}$. In Fig.~\ref{FIG5} and Fig.~\ref{FIG6}, the red point represents $\bar{\bm{n}}^\textrm{zon}$, and the orange borders represent the projection of all communication user zones onto the $\bar{u}$-$\bar{v}$ plane. We set $\eta=0.003$ for target sensing. It can be observed in Fig.~\ref{correlation_dense_MA} and Fig.~\ref{correlation_sparse_MA} that the proposed MA scheme has a very narrow main lobe with only a few side-lobes, which can minimize interference leakage to communication users at other locations and significantly enhance multiuser communication. In contrast, the dense UPA scheme has a wide main lobe, leading to higher channel correlation among nearby users in Fig.~\ref{correlation_dense_UPA_dense}. Moreover, the sparse UPA scheme has many grating lobes, leading to higher channel correlation among users located farther away and ultimately reducing the minimum achievable rate in Fig.~\ref{correlation_sparse_UPA_sparse}. Consequently, the MA system outperforms FPA systems by reconfiguring the array geometry dynamically to adapt to different locations of communication user zones.

	Fig.~\ref{FIG7} compares the minimum achievable rate versus $K$ for different schemes. We set $\eta=0.003$ for target sensing. It is observed that the proposed MA schemes always outperform the benchmark schemes with FPAs. For a small number of users, the performance of the MA schemes approaches the upper-bound of the minimum achievable rate. Moreover, the performance gap between the MA and FPA schemes increases with the number of users, highlighting the efficiency of MA systems in enhancing multiuser communication in scenarios with a large number of users.

	Finally, to validate the sensing performance of the designed MA array, Fig.~\ref{FIG8} compares the CRBs and the actual values of ${\rm{MSE}}(u)$ and ${\rm{MSE}}(v)$ versus the sensing transmit power $P^{\rm{s}}$ for different schemes. The MAs' positions are configured as shown in Fig.~\ref{position2}. The results indicate that the curves representing AoA estimation MSE using MLE closely approach the CRB for both the proposed scheme and the UPA-dense scheme with high transmit power. However, the UPA-sparse scheme exhibits multiple grating lobes as shown in Fig.~\ref{correlation_dense_UPA_sparse} and Fig.~\ref{correlation_sparse_UPA_sparse}, leading to ambiguity in distinguishing the true AoA $\bm{\chi} = [0.35, 0.71]^{\mathsf T}$ from false estimates associated with these grating lobes. This ambiguity leads to higher MSE obtained via MLE for the UPA-sparse scheme. Additionally, the proposed scheme achieves significantly lower MSE compared to the benchmark schemes with FPA, with its MSE close to the CRB lower-bound. Furthermore, the similar performance of ${\rm{MSE}}(u)$ and ${\rm{MSE}}(v)$ demonstrates the effectiveness of the designed MA array in accurately estimating both $u$ and $v$.
	

	\section{Conclusions}

	In this paper, we studied ISAC systems aided by the MA array to improve the communication and sensing performance via flexible antenna position optimization. First, we considered the downlink multiuser communication, where each user is randomly distributed  within a given 3D zone with local movement. To reduce the overhead of frequent antenna movement, the APV was designed based on users' statistical CSI, so that the antenna movement is implemented in large timescales. Then, for target sensing, the CRBs of the estimation MSE for different spatial AoAs were derived as functions of MAs' positions. Based on the above, we formulated an optimization problem to maximize the expected minimum achievable rate among all communication users, with given constraints on the maximum acceptable CRB thresholds for target sensing. An alternating optimization algorithm was proposed to iteratively optimize one of the horizontal and vertical APVs of the MA array with the other being fixed. Numerical results demonstrated that our proposed MA arrays can significantly enlarge the trade-off region between communication and sensing performance compared to conventional FPA arrays with different inter-antenna spacing. It was also revealed that the steering vectors of the designed MA arrays exhibit low correlation in the angular domain, thus effectively reducing channel correlation among communication users to enhance their achievable rates, while alleviating ambiguity in target angle estimation to achieve improved sensing accuracy.

	\appendix
	\subsection{Proof of Theorem 1}
	 By vectorizing \eqref{Y}, we have
	 \begin{align}
	 	\bm{y}^{\rm{s}}&\triangleq{\rm{vec}}(\bm{Y}^{\rm{s}}) \\ 
	 	&= {\rm{vec}}\left(\bm{H}^{\rm{s}}(\tilde{\bm{r}},\bm{\chi})\bm{S}^{\rm{s}}\right) + {\rm{vec}}(\bm{Z}^{\rm{s}}) \notag\\
	 	&= \beta{\rm{vec}}\left(\bm{\alpha}(\tilde{\bm{r}},\bm{\chi})\bm{\alpha}(\tilde{\bm{r}},\bm{\chi})^{\mathsf T} \bm{S}^{\rm{s}}\right) + {\rm{vec}}(\bm{Z}^{\rm{s}}) \notag \\
	 	&\triangleq \beta\bm{b}(\tilde{\bm{r}},\bm{\chi}) + \bm{z}^{\rm{s}}, \notag
	 \end{align}
 	where $\bm{y}^{\rm{s}}\triangleq{\rm{vec}}(\bm{Y}^{\rm{s}}) \in\mathbb{C}^{NT\times1}$, $\bm{b}(\tilde{\bm{r}},\bm{\chi})\triangleq {\rm{vec}}\left(\bm{\alpha}(\tilde{\bm{r}},\bm{\chi})\bm{\alpha}(\tilde{\bm{r}},\bm{\chi})^{\mathsf T} \bm{S}^{\rm{s}}\right) \in\mathbb{C}^{NT\times1}$, and $\bm{z}^{\rm{s}}\triangleq{\rm{vec}}(\bm{Z}^{\rm{s}}) \in\mathbb{C}^{NT\times1}\sim \mathcal{CN}(0,\bm{R}_{\bm{z}})$, with $\bm{R}_{\bm{z}}=\sigma^2\bm{I}_{NT}$. Then, the MLE of the unknown parameters $\beta$ and $\bm{\chi}$ can be written as
 	\begin{align}\label{hatbetaeta}
 		\left(\hat{\beta},\hat{\bm{\chi}}\right) = \arg\min_{\bar{\beta},\bar{\bm{\chi}}} \|\bm{y}^{\rm{s}} - \bar{\beta}\bm{b}(\tilde{\bm{r}},\bar{\bm{\chi}})\|_2^2.
 	\end{align}
 	With any given $\bm{\chi}$, the optimal estimation of $\beta$ is given by
 	\begin{align}\label{hatbeta}
 		\hat{\beta} = \frac{\bm{b}(\tilde{\bm{r}},\bm{\chi})^{\mathsf H} \bm{y}^{\rm{s}}}{\|\bm{b}(\tilde{\bm{r}},\bm{\chi})\|_2^2}.
 	\end{align}
 	Then, by substituting \eqref{hatbeta} back into \eqref{hatbetaeta}, we have
 	\begin{align}
 		&\|\bm{y}^{\rm{s}} - \hat{\beta}\bm{b}(\tilde{\bm{r}},\bm{\chi})\|_2^2 \\
 		=& \left(\bm{y}^{\rm{s}} - \hat{\beta}\bm{b}(\tilde{\bm{r}},\bm{\chi})\right)^{\mathsf H} \left(\bm{y}^{\rm{s}} - \hat{\beta}\bm{b}(\tilde{\bm{r}},\bm{\chi})\right) \notag\\
 		=&\|\bm{y}^{\rm{s}}\|_2^2 + |\hat{\beta}|^2 \|\bm{b}(\tilde{\bm{r}},\bm{\chi})\|_2^2  - 2\Re\left\{\hat{\beta}\left(\bm{y}^{\rm{s}}\right)^{\mathsf H} \bm{b}(\tilde{\bm{r}},\bm{\chi})\right\} \notag\\
 		=&\|\bm{y}^{\rm{s}}\|_2^2 - \frac{\left|(\bm{y}^{\rm{s}})^{\mathsf H} \bm{b}(\tilde{\bm{r}},\bm{\chi})\right|^2}{\|\bm{b}(\tilde{\bm{r}},\bm{\chi})\|_2^2}. \notag
 	\end{align}
 	On one hand, the numerator of $\frac{\left|(\bm{y}^{\rm{s}})^{\mathsf H} \bm{b}(\tilde{\bm{r}},\bm{\chi})\right|^2}{\|\bm{b}(\tilde{\bm{r}},\bm{\chi})\|_2^2}$ can be further written as
 	\begin{align}
 		\left|(\bm{y}^{\rm{s}})^{\mathsf H} \bm{b}(\tilde{\bm{r}},\bm{\chi})\right|^2 =& \left|{\rm{vec}}(\bm{Y}^{\rm{s}})^{\mathsf H} {\rm{vec}}\left(\bm{\alpha}(\tilde{\bm{r}},\bm{\chi})\bm{\alpha}(\tilde{\bm{r}},\bm{\chi})^{\mathsf T} \bm{S}^{\rm{s}}\right) \right|^2 \notag\\
 		\overset{(a_1)}=& \left|{\rm{Tr}}\left((\bm{Y}^{\rm{s}})^{\mathsf H} \bm{\alpha}(\tilde{\bm{r}},\bm{\chi})\bm{\alpha}(\tilde{\bm{r}},\bm{\chi})^{\mathsf T} \bm{S}^{\rm{s}}\right) \right|^2 \\
 		\overset{(a_2)}=& \left|{\rm{Tr}}\left(\bm{\alpha}(\tilde{\bm{r}},\bm{\chi})^{\mathsf T} \bm{S}^{\rm{s}} (\bm{Y}^{\rm{s}})^{\mathsf H} \bm{\alpha}(\tilde{\bm{r}},\bm{\chi}) \right) \right|^2 \notag\\
 		\overset{(a_3)}=& \left|{\rm{vec}}\left(\bm{\alpha}(\tilde{\bm{r}},\bm{\chi})^{\mathsf T} \bm{S}^{\rm{s}} (\bm{Y}^{\rm{s}})^{\mathsf H} \bm{\alpha}(\tilde{\bm{r}},\bm{\chi}) \right) \right|^2 \notag\\
 		\overset{(a_4)}=& \left|\left(\bm{\alpha}(\tilde{\bm{r}},\bm{\chi})\otimes \bm{\alpha}(\tilde{\bm{r}},\bm{\chi})\right)^{\mathsf T} {\rm{vec}}\left(\bm{S}^{\rm{s}}(\bm{Y}^{\rm{s}})^{\mathsf H}\right)\right|^2, \notag
 	\end{align}
 	where the equality $(a_1)$ holds because ${\rm{vec}}(\bm{A})^{\mathsf H} {\rm{vec}}(\bm{B}) = {\rm{Tr}}(\bm{A}^{\mathsf H} \bm{B})$ for $\bm{A}$ and $\bm{B}$ of equal size. The equality $(a_2)$ holds because ${\rm{Tr}}(\bm{A}\bm{B}) = {\rm{Tr}}(\bm{B}\bm{A})$ for $\bm{A} \in{\mathbb{C}^{p \times q}}$ and $\bm{B} \in{\mathbb{C}^{q \times p}}$. The equality $(a_3)$ holds because $\bm{\alpha}(\tilde{\bm{r}},\bm{\chi})^{\mathsf T} \bm{S}^{\rm{s}} (\bm{Y}^{\rm{s}})^{\mathsf H} \bm{\alpha}(\tilde{\bm{r}},\bm{\chi})$ is a scalar. The equality $(a_4)$ holds because ${\rm{vec}}(\bm{A}\bm{B}\bm{C}) = (\bm{C}^{\mathsf T} \otimes \bm{A}) {\rm{vec}}(\bm{B})$ for $\bm{A} \in{\mathbb{C}^{p \times q}}$, $\bm{B} \in{\mathbb{C}^{q \times r}}$, and $\bm{C} \in{\mathbb{C}^{r \times s}}$.
 	
 	On the other hand, the denominator of $\frac{\left|(\bm{y}^{\rm{s}})^{\mathsf H} \bm{b}(\tilde{\bm{r}},\bm{\chi})\right|^2}{\|\bm{b}(\tilde{\bm{r}},\bm{\chi})\|_2^2}$ can be further written as
 	\begin{align}
 		&\|\bm{b}(\tilde{\bm{r}},\bm{\chi})\|_2^2 \\
 		&= {\rm{vec}}\left(\bm{\alpha}(\tilde{\bm{r}},\bm{\chi})\bm{\alpha}(\tilde{\bm{r}},\bm{\chi})^{\mathsf T} \bm{S}^{\rm{s}}\right)^{\mathsf H} {\rm{vec}}\left(\bm{\alpha}(\tilde{\bm{r}},\bm{\chi})\bm{\alpha}(\tilde{\bm{r}},\bm{\chi})^{\mathsf T} \bm{S}^{\rm{s}}\right) \notag\\
 		&={\rm{Tr}}\left( (\bm{S}^{\rm{s}})^{\mathsf H}\bm{\alpha}(\tilde{\bm{r}},\bm{\chi})^{\mathsf *}\bm{\alpha}(\tilde{\bm{r}},\bm{\chi})^{\mathsf H} \bm{\alpha}(\tilde{\bm{r}},\bm{\chi})\bm{\alpha}(\tilde{\bm{r}},\bm{\chi})^{\mathsf T} \bm{S}^{\rm{s}} \right) \notag\\
 		&\overset{(b_1)}=N{\rm{Tr}}\left(\bm{\alpha}(\tilde{\bm{r}},\bm{\chi})^{\mathsf T} \bm{S}^{\rm{s}}  (\bm{S}^{\rm{s}})^{\mathsf H}\bm{\alpha}(\tilde{\bm{r}},\bm{\chi})^{\mathsf *} \right) \notag\\
 		&\overset{(b_2)}=P^{\rm{s}}TN, \notag
 	\end{align}
 	where the equality $(b_1)$ holds because $\bm{\alpha}(\tilde{\bm{r}},\bm{\chi})^{\mathsf H} \bm{\alpha}(\tilde{\bm{r}},\bm{\chi}) = N$ and ${\rm{Tr}}(\bm{A}\bm{B}) = {\rm{Tr}}(\bm{B}\bm{A})$ for $\bm{A} \in{\mathbb{C}^{p \times q}}$ and $\bm{B} \in{\mathbb{C}^{q \times p}}$. The equality $(b_2)$ holds because $\bm{S}^{\rm{s}}(\bm{S}^{\rm{s}})^{\mathsf H} = \frac{P^{\rm{s}}T}{N} \bm{I}_N$.
 	
 	Since both $\|\bm{y}^{\rm{s}}\|_2^2$ and $\|\bm{b}(\tilde{\bm{r}},\bm{\chi})\|_2^2$ are constant w.r.t. $\bm{\chi}$, the MLE of $\bm{\chi}$ is given by
 	\begin{align}
 		\hat{\bm{\chi}} &= \arg\min_{\bar{\bm{\chi}}} \|\bm{y}^{\rm{s}}\|_2^2 - \frac{\left|(\bm{y}^{\rm{s}})^{\mathsf H} \bm{b}(\tilde{\bm{r}},\bar{\bm{\chi}})\right|^2}{\|\bm{b}(\tilde{\bm{r}},\bar{\bm{\chi}})\|_2^2} \\
 		&= \arg\max_{\bar{\bm{\chi}}} \left|\left(\bm{\alpha}(\tilde{\bm{r}},\bar{\bm{\chi}})\otimes \bm{\alpha}(\tilde{\bm{r}},\bar{\bm{\chi}})\right)^{\mathsf T} {\rm{vec}}\left(\bm{S}^{\rm{s}}(\bm{Y}^{\rm{s}})^{\mathsf H}\right)\right|^2. \notag
 	\end{align}
 	This thus completes the proof of Theorem 1. 
	
	\subsection{Proof of Theorem 2}
	Let $\bm{\kappa}\triangleq[u,v,\Re(\beta),\Im(\beta)]^{\mathsf T} \in\mathbb{R}^{4\times1}$ denote the vector of unknown parameters
	to be estimated, which includes the two spatial AoAs and the real and imaginary parts of complex	channel coefficient. To facilitate deriving the CRB expression, we define $\bm{f}(\bm{\chi},\bm{\beta}) \triangleq \beta\bm{b}(\tilde{\bm{r}},\bm{\chi})\in \mathbb{C}^{NT\times1}$ with $\bm{\beta}\triangleq[\Re(\beta),\Im(\beta)]^{\mathsf T}$. Let $\bm{F} \in\mathbb{R}^{4\times 4}$ denote the Fisher information matrix (FIM) for estimating $\bm{\kappa}$. Then, the entry in the $p$th ($p=1,2,\ldots,4$) row and $q$th ($q=1,2,\ldots,4$) column of $\bm{F}$ can be written as 
	\begin{align}
		\bm{F}[p,q] &= 2\Re\left( \frac{\partial \bm{f}(\bm{\chi},\bm{\beta})^{\mathsf H}}{\partial \bm{\kappa}[p]} \bm{R}_{\bm{z}}^{-1} \frac{\partial \bm{f}(\bm{\chi},\bm{\beta})}{\partial \bm{\kappa}[q]} \right) \notag\\
		&=\frac{2}{\sigma^2}\Re\left( \frac{\partial \bm{f}(\bm{\chi},\bm{\beta})^{\mathsf H}}{\partial \bm{\kappa}[p]} \frac{\partial \bm{f}(\bm{\chi},\bm{\beta})}{\partial \bm{\kappa}[q]} \right).
	\end{align}
	Accordingly, the FIM $\bm{F}$ can be partitioned as
	\begin{align}\label{F}
		\bm{F} = \begin{bmatrix}
			\bm{J}_{\bm{\chi},\bm{\chi}}  & \bm{J}_{\bm{\chi},\bm{\beta}} \\
			\bm{J}_{\bm{\chi},\bm{\beta}}^{\mathsf T}  & \bm{J}_{\bm{\beta},\bm{\beta}} 
		\end{bmatrix},
	\end{align}
	where $\bm{J}_{\bm{\chi},\bm{\chi}}\in \mathbb{R}^{2 \times 2}$, $\bm{J}_{\bm{\chi},\bm{\beta}}\in \mathbb{R}^{2 \times 2}$, and $\bm{J}_{\bm{\beta},\bm{\beta}}\in \mathbb{R}^{2 \times 2}$ are given by
	\begin{align}\label{J}
		\bm{J}_{\bm{\chi},\bm{\chi}} &= \frac{2}{\sigma^2}\Re\left( \frac{\partial \bm{f}(\bm{\chi},\bm{\beta})^{\mathsf H}}{\partial \bm{\chi}} \frac{\partial \bm{f}(\bm{\chi},\bm{\beta})}{\partial \bm{\chi}} \right) \notag\\
		\bm{J}_{\bm{\chi},\bm{\beta}} &= \frac{2}{\sigma^2}\Re\left( \frac{\partial \bm{f}(\bm{\chi},\bm{\beta})^{\mathsf H}}{\partial \bm{\chi}} \frac{\partial \bm{f}(\bm{\chi},\bm{\beta})}{\partial \bm{\beta}} \right) \notag\\
		\bm{J}_{\bm{\beta},\bm{\beta}} &= \frac{2}{\sigma^2}\Re\left( \frac{\partial \bm{f}(\bm{\chi},\bm{\beta})^{\mathsf H}}{\partial \bm{\beta}} \frac{\partial \bm{f}(\bm{\chi},\bm{\beta})}{\partial \bm{\beta}} \right).
	\end{align}
	Then, based on the inverse formula of the block matrix, the sub-matrix formed by the first two rows and first two columns of the inverse matrix of $\bm{F}$ can be written as
	\begin{align}\label{Lambda}
		\bm{\Lambda}\triangleq \begin{bmatrix}
			\bm{F}^{-1}[1,1]  & \bm{F}^{-1}[1,2] \\
			\bm{F}^{-1}[2,1]  & \bm{F}^{-1}[2,2] 
		\end{bmatrix} = \left[\bm{J}_{\bm{\chi},\bm{\chi}} - \bm{J}_{\bm{\chi},\bm{\beta}} \bm{J}_{\bm{\beta},\bm{\beta}}^{-1} \bm{J}_{\bm{\chi},\bm{\beta}}^{\mathsf T}\right]^{-1}.
	\end{align}
	Then, we can derive the CRB expression of $\bm{\chi}$ as
	\begin{align}
		{\rm{CRB}}_u(\tilde{\bm{r}})=\bm{\Lambda}[1,1],\notag\\
		{\rm{CRB}}_v(\tilde{\bm{r}})=\bm{\Lambda}[2,2].
	\end{align}
	For ease of notation, we simply re-denote $\bm{\alpha}(\tilde{\bm{r}},\bm{\chi})$ as $\bm{\alpha}$ in the following, and define
	\begin{align}
		\dot{\bm{\alpha}}_{\bm{\chi}} \triangleq \frac{\partial \bm{\alpha}}{\partial \bm{\chi}} = \left[\dot{\bm{\alpha}}_u, \dot{\bm{\alpha}}_v\right] \in \mathbb{C}^{N\times 2},
	\end{align}
	where
	\begin{align}
		\dot{\bm{\alpha}}_u &\triangleq \frac{\partial \bm{\alpha}}{\partial u} = j\frac{2\pi}{\lambda} \bm{D}_y\bm{\alpha} \in \mathbb{C}^{N\times1}, \notag\\
		\dot{\bm{\alpha}}_v &\triangleq \frac{\partial \bm{\alpha}}{\partial v} = j\frac{2\pi}{\lambda} \bm{D}_z\bm{\alpha} \in \mathbb{C}^{N\times1},
	\end{align}
	with $\bm{D}_y\triangleq\textrm{diag}(\bm{y})$ and $\bm{D}_z\triangleq\textrm{diag}(\bm{z})$. Then, we have
	\begin{align}
		\frac{\partial \bm{f}(\bm{\chi},\bm{\beta})}{\partial \bm{\chi}} &= \beta\big[{\rm{vec}}\left(\left(\dot{\bm{\alpha}}_u\bm{\alpha}^{\mathsf T} + \bm{\alpha}\dot{\bm{\alpha}}_u^{\mathsf T}\right) \bm{S}^{\rm{s}}\right), \notag\\
		&~~~~ {\rm{vec}}\left(\left(\dot{\bm{\alpha}}_v\bm{\alpha}^{\mathsf T} + \bm{\alpha}\dot{\bm{\alpha}}_v^{\mathsf T}\right) \bm{S}^{\rm{s}}\right) \big] \notag\\
		&\triangleq \beta\left[{\rm{vec}}\left(\dot{\bm{A}}_u \bm{S}^{\rm{s}}\right),  {\rm{vec}}\left(\dot{\bm{A}}_v \bm{S}^{\rm{s}}\right) \right],
	\end{align}
	where $\dot{\bm{A}}_u\triangleq \dot{\bm{\alpha}}_u\bm{\alpha}^{\mathsf T} + \bm{\alpha}\dot{\bm{\alpha}}_u^{\mathsf T} \in\mathbb{C}^{N \times N}$ and $\dot{\bm{A}}_v\triangleq \dot{\bm{\alpha}}_v\bm{\alpha}^{\mathsf T} + \bm{\alpha}\dot{\bm{\alpha}}_v^{\mathsf T} \in\mathbb{C}^{N \times N}$.	Moreover, we have
	\begin{align}\label{fbeta}
		\frac{\partial \bm{f}(\bm{\chi},\bm{\beta})}{\partial \bm{\beta}} = \frac{\partial \beta\bm{b}(\tilde{\bm{r}},\bm{\chi})}{\partial \bm{\beta}} = \bm{b}(\tilde{\bm{r}},\bm{\chi}) [1,j].
	\end{align}
	Then, $\bm{J}_{\bm{\chi},\bm{\chi}}$ in \eqref{J} can be further written as
	\begin{align}
		\bm{J}_{\bm{\chi},\bm{\chi}} &\overset{(c)}= \frac{2}{\sigma^2}\Re\left( \left(\frac{\partial \bm{f}(\bm{\chi},\bm{\beta})}{\partial \bm{\chi}}\right)^{\mathsf H} \frac{\partial \bm{f}(\bm{\chi},\bm{\beta})}{\partial \bm{\chi}} \right) \notag\\
		&\triangleq \frac{2|\beta|^2}{\sigma^2}\Re\left( \begin{bmatrix}
			Q_{u,u} & Q_{u,v} \\
			Q_{v,u}  & Q_{v,v}
		\end{bmatrix} \right),
	\end{align}
	where the equality  $(c)$ holds since $\frac{\partial \bm{f}(\bm{\chi},\bm{\beta})^{\mathsf H}}{\partial \bm{\chi}} = \left(\frac{\partial \bm{f}(\bm{\chi},\bm{\beta})}{\partial \bm{\chi}}\right)^{\mathsf H}$. $Q_{a,b}\triangleq{\rm{vec}}(\dot{\bm{A}}_a \bm{S}^{\rm{s}})^{\mathsf H} {\rm{vec}}(\dot{\bm{A}}_b \bm{S}^{\rm{s}})$, $a,b\in\{u,v\}$, which can be further represented as
	\begin{align}\label{Qab}
		Q_{a,b} &\overset{(d_1)}= {\rm{Tr}} \left(\dot{\bm{A}}_b \bm{S}^{\rm{s}} (\bm{S}^{\rm{s}})^{\mathsf H}(\dot{\bm{A}}_a)^{\mathsf H}\right) \\
		&\overset{(d_2)}=\frac{P^{\rm{s}}T}{N}{\rm{Tr}} \left(\dot{\bm{A}}_b (\dot{\bm{A}}_a)^{\mathsf H}\right) \notag\\
		&= \frac{P^{\rm{s}}T}{N} \Big({\rm{Tr}}\left( \dot{\bm{\alpha}}_b\bm{\alpha}^{\mathsf T} \bm{\alpha}^{\mathsf *} \dot{\bm{\alpha}}_a^{\mathsf H} \right) + {\rm{Tr}}\left( \dot{\bm{\alpha}}_b\bm{\alpha}^{\mathsf T}  \dot{\bm{\alpha}}_a^{\mathsf *}\bm{\alpha}^{\mathsf H}  \right) \notag\\
		&~~~~ + {\rm{Tr}}\left( \bm{\alpha}\dot{\bm{\alpha}}_b^{\mathsf T} \bm{\alpha}^{\mathsf *} \dot{\bm{\alpha}}_a^{\mathsf H} \right) + {\rm{Tr}}\left( \bm{\alpha}\dot{\bm{\alpha}}_b^{\mathsf T}  \dot{\bm{\alpha}}_a^{\mathsf *}\bm{\alpha}^{\mathsf H}  \right)\Big) \notag\\
		&\overset{(d_3)}= \frac{P^{\rm{s}}T}{N} \big( N\dot{\bm{\alpha}}_a^{\mathsf H} \dot{\bm{\alpha}}_b - j\frac{2\pi}{\lambda} \gamma_{c(a)} \bm{\alpha}^{\mathsf H}\dot{\bm{\alpha}}_b \notag\\
		&~~~~ + j\frac{2\pi}{\lambda} \gamma_{c(b)} \dot{\bm{\alpha}}_a^{\mathsf H} \bm{\alpha} + N\dot{\bm{\alpha}}_b^{\mathsf T}\dot{\bm{\alpha}}_a^{\mathsf *} \big) \notag\\
		&\overset{(d_4)}= \frac{8\pi^2P^{\rm{s}}T}{\lambda^2N} \left( N\bar{\gamma}_{c(a),c(b)} + \gamma_{c(a)}\gamma_{c(b)} \right), \notag
	\end{align}
	where the equality $(d_1)$ holds because ${\rm{vec}}(\bm{A})^{\mathsf H} {\rm{vec}}(\bm{B}) = {\rm{Tr}}(\bm{B}\bm{A}^{\mathsf H})$ for $\bm{A}$ and $\bm{B}$ of equal size. The equality $(d_2)$ holds because $\bm{S}^{\rm{s}}(\bm{S}^{\rm{s}})^{\mathsf H} = \frac{P^{\rm{s}}T}{N} \bm{I}_N$. The equality $(d_3)$ holds because $\bm{\alpha}^{\mathsf H} \bm{\alpha} = N$; ${\rm{Tr}}(\bm{a}\bm{b}^{\mathsf H}) = \bm{b}^{\mathsf H}\bm{a}$ for $\bm{a}$ and $\bm{b}$ of equal size; $c(a)\triangleq\begin{cases}
		y,&{a=u}\\
		z,&{a=v}
	\end{cases}$; and $\bm{\alpha}^{\mathsf H}  \dot{\bm{\alpha}}_a = j\frac{2\pi}{\lambda}\sum_{n=1}^{N}\bm{D}_{c(a)}[n,n] = j\frac{2\pi}{\lambda}\sum_{n=1}^{N} c(a)_n \triangleq j\frac{2\pi}{\lambda}\gamma_{c(a)}$. The equality $(d_4)$ holds because $\dot{\bm{\alpha}}_a^{\mathsf H} \dot{\bm{\alpha}}_b = \frac{4\pi^2}{\lambda^2}\sum_{n=1}^{N}\bm{D}_{c(a)}[n,n]\bm{D}_{c(b)}[n,n] = \frac{4\pi^2}{\lambda^2}\sum_{n=1}^{N} c(a)_n c(b)_n \triangleq \frac{4\pi^2}{\lambda^2}\bar{\gamma}_{c(a),c(b)}$. Then, $\bm{J}_{\bm{\chi},\bm{\chi}}$ can be simplified as
	\begin{align}
		\bm{J}_{\bm{\chi},\bm{\chi}} &= \frac{16\pi^2P^{\rm{s}}T|\beta|^2}{\sigma^2\lambda^2N}\begin{bmatrix}
			N\bar{\gamma}_{y,y} + \gamma_{y}\gamma_{y} & N\bar{\gamma}_{y,z} + \gamma_{y}\gamma_{z} \\
			N\bar{\gamma}_{y,z} + \gamma_{y}\gamma_{z} & N\bar{\gamma}_{z,z} + \gamma_{z}\gamma_{z}
		\end{bmatrix}.
	\end{align}
	Moreover, $\bm{J}_{\bm{\chi},\bm{\beta}}$ can be further written as
	\begin{align}
		\bm{J}_{\bm{\chi},\bm{\beta}} &= \frac{2}{\sigma^2}\Re\left( \frac{\partial \bm{f}(\bm{\chi},\bm{\beta})^{\mathsf H}}{\partial \bm{\chi}} \frac{\partial \bm{f}(\bm{\chi},\bm{\beta})}{\partial \bm{\beta}} \right) \notag\\
		&= \frac{2}{\sigma^2}\Re\left(\begin{bmatrix}
			\beta^{\mathsf *}{\rm{vec}}(\dot{\bm{A}}_u \bm{S}^{\rm{s}})^{\mathsf H} \bm{b}(\tilde{\bm{r}},\bm{\chi}) [1,j] \\
			\beta^{\mathsf *}{\rm{vec}}(\dot{\bm{A}}_v \bm{S}^{\rm{s}})^{\mathsf H} \bm{b}(\tilde{\bm{r}},\bm{\chi}) [1,j]
		\end{bmatrix}  \right),
	\end{align}
	where ${\rm{vec}}(\dot{\bm{A}}_u \bm{S}^{\rm{s}})^{\mathsf H} \bm{b}(\tilde{\bm{r}},\bm{\chi})$ can be derived in a manner similar to the procedure in \eqref{Qab} as
	\begin{align}
		&{\rm{vec}}(\dot{\bm{A}}_u \bm{S}^{\rm{s}})^{\mathsf H} \bm{b}(\tilde{\bm{r}},\bm{\chi}) \\
		=& {\rm{Tr}}\left(\bm{\alpha}\bm{\alpha}^{\mathsf T} \bm{S}^{\rm{s}} (\bm{S}^{\rm{s}})^{\mathsf{H}}(\dot{\bm{A}}_u)^{\mathsf H} \right)\notag\\
		=& \frac{P^{\rm{s}}T}{N} \left({\rm{Tr}}\left( \bm{\alpha}\bm{\alpha}^{\mathsf T} \bm{\alpha}^{\mathsf *} \dot{\bm{\alpha}}_u^{\mathsf H} \right) + {\rm{Tr}}\left( \bm{\alpha}\bm{\alpha}^{\mathsf T}  \dot{\bm{\alpha}}_u^{\mathsf *}\bm{\alpha}^{\mathsf H}  \right) \right) \notag\\
		=& \frac{P^{\rm{s}}T}{N} \left( N\dot{\bm{\alpha}}_u^{\mathsf H}\bm{\alpha}  -  j\frac{2\pi}{\lambda}\gamma_{y}\bm{\alpha}^{\mathsf H}\bm{\alpha}  \right) \notag\\
		=& -j\frac{4\pi P^{\rm{s}}T\gamma_{y}}{\lambda}. \notag
	\end{align}
	Similarly, ${\rm{vec}}(\dot{\bm{A}}_v \bm{S}^{\rm{s}})^{\mathsf H} \bm{b}(\tilde{\bm{r}},\bm{\chi})$ can be further written as
	\begin{align}
		{\rm{vec}}(\dot{\bm{A}}_v \bm{S}^{\rm{s}})^{\mathsf H} \bm{b}(\tilde{\bm{r}},\bm{\chi}) = -j\frac{4\pi P^{\rm{s}}T\gamma_{z}}{\lambda}.
	\end{align}
	Then, $\bm{J}_{\bm{\chi},\bm{\beta}}$ can be simplified as
	\begin{align}
		\bm{J}_{\bm{\chi},\bm{\beta}} &= \frac{8\pi P^{\rm{s}}T}{\sigma^2\lambda}\Re\left(-j\beta^{\mathsf *}\begin{bmatrix}
			\gamma_{y} & j\gamma_{y} \\
			\gamma_{z} & j\gamma_{z}
		\end{bmatrix}  \right) \\
		&=  \frac{8\pi P^{\rm{s}}T}{\sigma^2\lambda}\begin{bmatrix}
			-\Im(\beta)\gamma_{y} & \Re(\beta)\gamma_{y} \\
			-\Im(\beta)\gamma_{z} & \Re(\beta)\gamma_{z}
		\end{bmatrix}  \notag\\
		&\triangleq  \frac{8\pi P^{\rm{s}}T}{\sigma^2\lambda}\bm{\gamma}\bar{\bm{\beta}}^{\mathsf T},  \notag
	\end{align}
	where $\bm{\gamma}\triangleq[\gamma_{y},\gamma_{z}]^{\mathsf T}$ and $\bar{\bm{\beta}}\triangleq[-\Im(\beta),\Re(\beta)]^{\mathsf T}$.
	
	Next, substituting \eqref{fbeta} into \eqref{J}, $\bm{J}_{\bm{\beta},\bm{\beta}}$ can be derived in a manner similar to the procedure in \eqref{Qab} as
	\begin{align}
		\bm{J}_{\bm{\beta},\bm{\beta}} &= \frac{2}{\sigma^2}\Re\left( \frac{\partial \bm{f}(\bm{\chi},\bm{\beta})^{\mathsf H}}{\partial \bm{\beta}} \frac{\partial \bm{f}(\bm{\chi},\bm{\beta})}{\partial \bm{\beta}} \right) \notag\\
		&= \frac{2}{\sigma^2}\Re\left( (\bm{b}(\tilde{\bm{r}},\bm{\chi}) [1,j])^{\mathsf H} \bm{b}(\tilde{\bm{r}},\bm{\chi}) [1,j] \right) \notag\\
		&= \frac{2}{\sigma^2}\Re\left( [1,j]^{\mathsf H} {\rm{Tr}}\left(\bm{\alpha}\bm{\alpha}^{\mathsf T} \bm{S}^{\rm{s}} (\bm{S}^{\rm{s}})^{\mathsf{H}}\bm{\alpha}^{\mathsf *}\bm{\alpha}^{\mathsf H} \right) [1,j] \right) \notag\\
		&= \frac{2P^{\rm{s}}TN}{\sigma^2}\Re\left( \begin{bmatrix}
			1  & j \\
			-j  & 1
		\end{bmatrix} \right) = \frac{2P^{\rm{s}}TN}{\sigma^2}\bm{I}_2.
	\end{align}
	Then, $\bm{\Lambda}$ in \eqref{Lambda} can be further expressed as
	\begin{align}
		\bm{\Lambda} &= \left[\bm{J}_{\bm{\chi},\bm{\chi}} - \bm{J}_{\bm{\chi},\bm{\beta}} \bm{J}_{\bm{\beta},\bm{\beta}}^{-1} \bm{J}_{\bm{\chi},\bm{\beta}}^{\mathsf T}\right]^{-1} \\
		&= \Bigg[\frac{16\pi^2P^{\rm{s}}T|\beta|^2}{\sigma^2\lambda^2N}\begin{bmatrix}
			N\bar{\gamma}_{y,y} + \gamma_{y}\gamma_{y} & N\bar{\gamma}_{y,z} + \gamma_{y}\gamma_{z} \\
			N\bar{\gamma}_{y,z} + \gamma_{y}\gamma_{z} & N\bar{\gamma}_{z,z} + \gamma_{z}\gamma_{z}
		\end{bmatrix} \notag\\
		&~~~~~~ - \frac{\sigma^2}{2P^{\rm{s}}TN}\left(\frac{8\pi P^{\rm{s}}T}{\sigma^2\lambda}\right)^2\bm{\gamma}\bar{\bm{\beta}}^{\mathsf T} \bar{\bm{\beta}}\bm{\gamma}^{\mathsf T}\Bigg]^{-1} \notag\\
		&= \frac{\sigma^2\lambda^2N}{16\pi^2P^{\rm{s}}T|\beta|^2} \Bigg[\begin{bmatrix}
			N\bar{\gamma}_{y,y} + \gamma_{y}\gamma_{y} & N\bar{\gamma}_{y,z} + \gamma_{y}\gamma_{z} \\
			N\bar{\gamma}_{y,z} + \gamma_{y}\gamma_{z} & N\bar{\gamma}_{z,z} + \gamma_{z}\gamma_{z}
		\end{bmatrix} \notag\\
		&~~~~~~ - 2\begin{bmatrix}
			\gamma_{y}\gamma_{y} & \gamma_{y}\gamma_{z} \\
			\gamma_{y}\gamma_{z} & \gamma_{z}\gamma_{z}
		\end{bmatrix}\Bigg]^{-1} \notag\\
		&= \frac{\sigma^2\lambda^2N}{16\pi^2P^{\rm{s}}T|\beta|^2}\begin{bmatrix}
			N\bar{\gamma}_{y,y} - \gamma_{y}\gamma_{y} & N\bar{\gamma}_{y,z} - \gamma_{y}\gamma_{z} \\
			N\bar{\gamma}_{y,z} - \gamma_{y}\gamma_{z} & N\bar{\gamma}_{z,z} - \gamma_{z}\gamma_{z}
		\end{bmatrix} ^{-1} \notag\\
		&\overset{(e_1)}= \frac{\sigma^2\lambda^2N}{16\pi^2P^{\rm{s}}T|\beta|^2}\begin{bmatrix}
			N^2{\rm{var}}(\bm{y}) & N^2{\rm{cov}}(\bm{y},\bm{z}) \\
			N^2{\rm{cov}}(\bm{y},\bm{z}) & N^2{\rm{var}}(\bm{z})
		\end{bmatrix} ^{-1} \notag\\
		&\overset{(e_2)}= \frac{\sigma^2\lambda^2}{16\pi^2P^{\rm{s}}TN|\beta|^2}\frac{1}{{\rm{var}}(\bm{y}){\rm{var}}(\bm{z})-{\rm{cov}}(\bm{y},\bm{z})^2} \notag\\
		&~~~~~~\begin{bmatrix}
			{\rm{var}}(\bm{z}) & -{\rm{cov}}(\bm{y},\bm{z}) \\
			-{\rm{cov}}(\bm{y},\bm{z}) & {\rm{var}}(\bm{y})
		\end{bmatrix},\notag
	\end{align}
	where the equality $(e_1)$ holds due to the definition of the variance function ${\rm{var}}(\bm{y})$ and covariance function ${\rm{cov}}(\bm{y},\bm{z})$ in \eqref{CRBr}, the equality $(e_2)$ holds since $\begin{bmatrix}
		a & b \\
		c & d
	\end{bmatrix}^{-1}=\frac{1}{ad-bc}\begin{bmatrix}
		d & -b \\
		-c & a
	\end{bmatrix}$. Finally, the CRB expression of $\bm{\chi}$ is given by
	\begin{align}\label{CRB_app}
		{\rm{CRB}}_u(\tilde{\bm{r}})=\bm{\Lambda}[1,1] = \frac{\sigma^2\lambda^2}{16\pi^2P^{\rm{s}}TN|\beta|^2}\frac{1}{{\rm{var}}(\bm{y})-\frac{{\rm{cov}}(\bm{y},\bm{z})^2}{{\rm{var}}(\bm{z})}},\notag\\
		{\rm{CRB}}_v(\tilde{\bm{r}})=\bm{\Lambda}[2,2] = \frac{\sigma^2\lambda^2}{16\pi^2P^{\rm{s}}TN|\beta|^2}\frac{1}{{\rm{var}}(\bm{z})-\frac{{\rm{cov}}(\bm{y},\bm{z})^2}{{\rm{var}}(\bm{y})}}.
	\end{align}
	This thus completes the proof of Theorem 2.

	\bibliographystyle{IEEEtran}
	\bibliography{IEEEabrv,IEEEexample}
\end{document}